\documentclass[aps,pra,twocolumn,superscriptaddress,nofootinbib,notitlepage,10pt]{revtex4-2}
\usepackage{epsfig,amssymb,amsmath,mathrsfs,amsfonts,color,amsthm,graphicx,float,color,bm,url}
\usepackage[colorlinks=true]{hyperref}

\usepackage[utf8,latin1]{inputenc}               
\usepackage[T1]{fontenc}                           
\usepackage[british]{babel}                      
\usepackage{mathpazo}      

\usepackage{graphicx,color}
\usepackage[table]{xcolor}
\usepackage{comment}

\usepackage{array}
\newcolumntype{P}[1]{>{\centering\arraybackslash}p{#1}}
\newcolumntype{M}[1]{>{\centering\arraybackslash}m{#1}}

\makeatletter
\makeatother
\usepackage{algpseudocode}
\usepackage[active]{srcltx}
\usepackage{subfigure}
\vfuzz2pt 

\def\<{\langle}\def\>{\rangle}

\def\:{\hbox{\bf
    :}}

\def\spc#1{\mathcal{#1}}

\def\citationorder{0}

\newtheorem{theo}{{Theorem}}
\newtheorem{defi}{{Definition}}
\newtheorem{lemma}{{Lemma}}


\newcommand{\h}[1]{\mathcal{H}_{#1}}
\newcommand{\im}{\mathring \imath} 
\usepackage{calc} 
\graphicspath{{Images/}} 
\usepackage{dsfont} 
\usepackage{physics} 
\usepackage{qcircuit} 
\usepackage{blindtext} 

\usepackage[linesnumbered,ruled,vlined]{algorithm2e} 

\newcommand{\lket}[1]{\mathinner{|{#1}\rangle}}
\newcommand{\lbra}[1]{\mathinner{ \langle{#1}|}}

\begin{document}
\title{Quantum Metrology for Non-Markovian Processes}

\author{Anian Altherr}
\email{aaltherr@ethz.ch}
\affiliation{Institute for Theoretical Physics, ETH Z\"urich, 8093 Z\"urich, Switzerland}
\author{Yuxiang Yang}
\email{yangyu@ethz.ch}
\affiliation{Institute for Theoretical Physics, ETH Z\"urich, 8093 Z\"urich, Switzerland} 
\affiliation{QICI Quantum Information and Computation Initiative, Department of Computer Science, The University of Hong Kong, Pokfulam Road, Hong Kong} 

\begin{abstract}
Quantum metrology is a rapidly developing branch of quantum technologies. While various theories have been established on quantum metrology for Markovian processes, i.e., quantum channel estimation, quantum metrology for non-Markovian processes is much less explored.
In this Letter, we establish a general framework of non-Markovian quantum metrology. For any parametrized non-Markovian process on a finite-dimensional system, we derive a formula for the maximal amount of quantum Fisher information that can be extracted from it by  an optimally controlled probe state. In addition, we design an algorithm that evaluates this quantum Fisher information via semidefinite programming.
We apply our framework to noisy frequency estimation, where we find that the optimal performance of quantum metrology is better in the non-Markovian scenario than in the Markovian scenario and  explore the possibility of efficient sensing via simple variational circuits.
\end{abstract}

\maketitle
\medskip

\noindent{\em Introduction.}   
	Quantum metrology holds the promise of an early application of quantum technologies that offer an advantage over classical ones. 
	This advantage, nevertheless, is often sensitive to noise \cite{Huelga1997,Escher2011,Demkowicz2012,Kolodynski2013}. Advances in quantum metrology have been focusing on identifying the ultimate limit of parameter estimation in the presence of noise \cite{Escher2011,Demkowicz2012,Smirne2016,yuan2017quantum,zhou2020asymptotic}. 
	So far, the analysis has mostly been constrained to the Markovian setting and carried out within the well-established model of quantum channel estimation [Fig. \ref{fig:introduction} (a)], where various tools have been developed. In particular, it has recently been shown that the quantum advantage can be retrieved by using appropriate quantum control \cite{Kessler2014, Dur2014,Demkowicz2017,Zhou2018}.

	Non-Markovian quantum metrology [Fig. \ref{fig:introduction} (b)], on the other hand, is much less explored. 
	Despite interesting findings in  specific cases (see, e.g., Ref.~\cite{Matsuzaki2011,Chin2012,Berrada2013,macieszczak2015zeno,Wang2017,Yang2019,mirkin2020quantum}), there has  not been a systematic way to treat non-Markovian quantum metrology. As a first step, in Ref.~\cite{Yang2019}, one of us proposed a general framework for non-Markovian quantum metrology based on an information-theoretic structure, named the quantum comb \cite{Chiribella2007,Chiribella2008,Chiribella2009}, which has also been applied to a broad range of tasks including channel discrimination \cite{Chiribella2008Metrology}, quantum network optimization \cite{Chiribella2016}, Markovianity \cite{taranto2019quantum}, and probabilistic quantum computation \cite{dong2021success}.
    However, given an arbitrary non-Markovian metrology task, the core question of determining \emph{the ultimate precision limit}, optimized over all possible means of quantum probe preparation and control, still remains open. This issue is pressing not only because non-Markovian processes are prevalent in physics, but also for the rapid development of quantum computing:
	With the advance of NISQ (near-term intermediate-scale quantum) devices \cite{preskill2018quantum}, deeper quantum circuits will appear where memory effects of the environment are more significant. New techniques will soon be needed to test and benchmark complex noises generated by such effects.
	
	In this Letter, we establish a general framework of assessing the precision of non-Markovian quantum metrology, quantified by the quantum Fisher information (QFI). The framework is built upon our preceding work~\cite{Yang2019}, where non-Markovian metrology is modelled using quantum combs. Here we address the pivotal problem of evaluating the QFI of arbitrary quantum combs, deriving both a general formula and an algorithm that efficiently computes the QFI of arbitrary combs via semidefinite programming.  As a working example, we apply the algorithm to the task of frequency estimation under non-Markovian noise, where we observe that the QFI can be increased by applying suitable quantum control to the system. We also show that, when the memory of the non-Markovian noise is not too deep, this control can be well approximated by a simple variational circuit, whose complexity grows only linearly in the memory depth. Our results extend  quantum metrology to generic physical processes with memory, paving the way for various future research and applications.

\medskip

\noindent{\it Quantum Fisher information of quantum combs.}
Our goal is to estimate a single parameter $\theta$ from a non-Markovian process that carries this parameter. For instance, the task could be to measure the frequency discrepancy in a noisy atomic clock where the atoms are coupled to a persistent environment, or to estimate the strength of time-correlated noise in a deep quantum circuit.

	\begin{figure}[t]	
		\centering
		\includegraphics[width=\columnwidth]{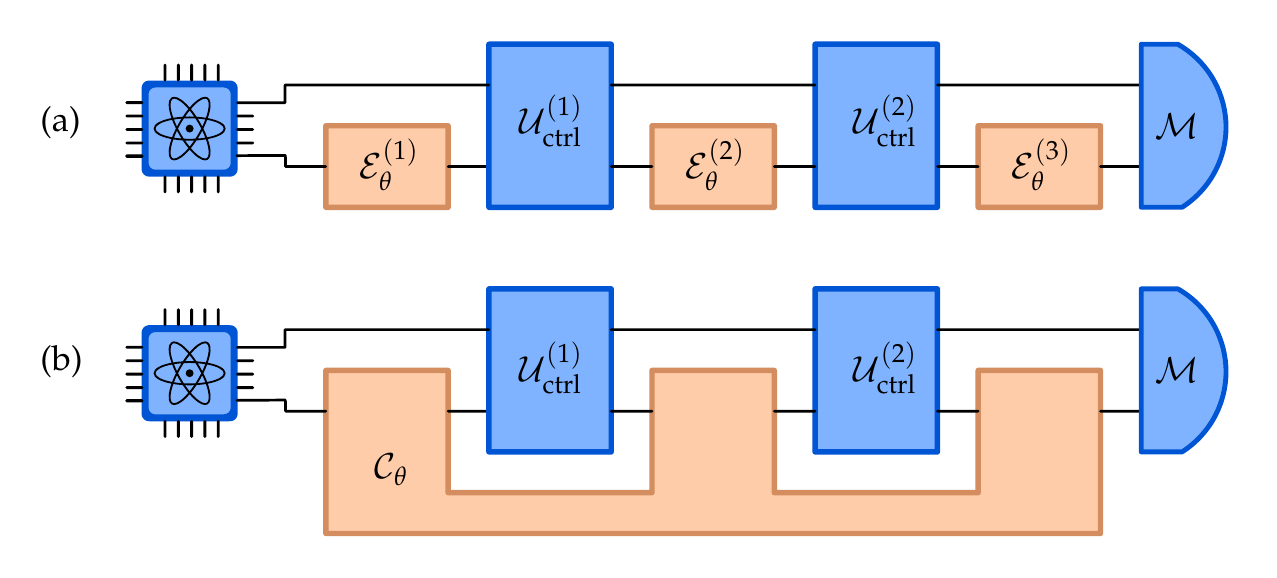}
		\caption{{\bf Markovian versus non-Markovian quantum metrology.} (a): The standard, Markovian setting of quantum metrology where the goal is to estimate $\theta$ from a sequence of quantum channels (orange), and the general strategy is to prepare a probe state and apply adaptive quantum control (blue). (b): Non-Markovian quantum metrology where $\theta$ is encoded in a non-Markovian process with inaccessible memory (orange).}
		\label{fig:introduction}
		\end{figure}
		
			\begin{figure}[b]
		\centering
		\includegraphics[width=\columnwidth]{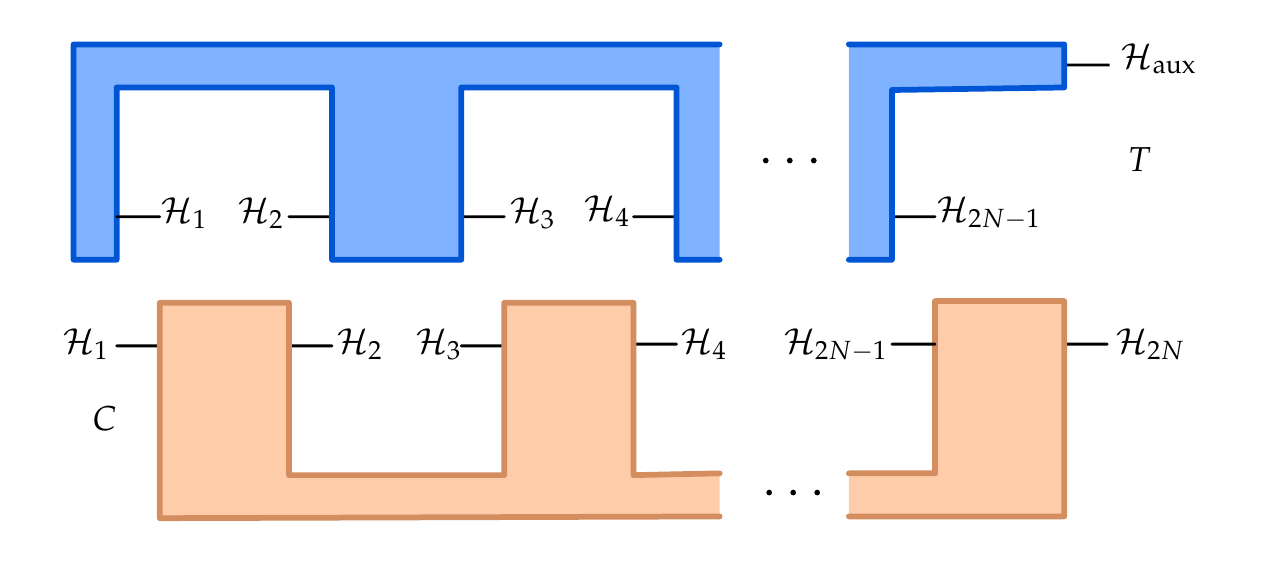}
		\caption{{\bf Quantum combs and probes.} A generic quantum comb $C \in \text{Comb}[ (\h{1},\h{2}), \ldots, (\h{2N-1},\h{2N})]$ (orange) consists of $N$ teeth, each representing one time step. 
		A probe $T \in \text{Comb}[ (\emptyset, \h{1}), (\h{2},\h{3}), \ldots, (\h{2N-2},\h{2N-1} \otimes \h{\text{aux}})]$ for $C$ (blue) is a comb that ``eats'' $C$ and ``spits out'' a quantum state on $\h{2N} \otimes \h{\text{aux}}$. }
		\label{fig:comb_probe}
		\end{figure}

Reference~\cite{Yang2019} introduced \emph{parametrized quantum combs} for parameter estimation with non-Markovian processes. 
Consider an $N$-step non-Markovian process  that consists of $N$ consecutive quantum channels with partially accessible inputs and outputs  concatenated by an inaccessible memory.
Such a process is completely positive (CP) 
and thus, by the Choi-Jamio{l}kowski isomorphism, 
 the non-Markovian process is uniquely characterized by a positive semidefinite Choi operator  \cite{Choi1975}, referred to as its quantum comb \cite{Chiribella2007,Chiribella2008,Chiribella2009}. For a process, the comb is obtained by inserting the second half of the vector $\sum_n|n\>|n\>$ into all input ports (e.g.\ odd-labeled spaces in Fig.~\ref{fig:comb_probe}), where $\{|n\>\}$ is an orthonormal basis of the input ports. Conversely, an operator corresponds to the comb of a process if it satisfies a series of linear constraints (see Definition \ref{def_comb} below).
	We will therefore  denote a non-Markovian process by its comb. In Fig.~\ref{fig:comb_probe}, we can see that a  comb consists of a sequence of $N$ teeth (corresponding to an $N$-step process), each tooth $k=1,\ldots, N$ has an input space $\h{2k-1}$ and an output space $\h{2k}$. We denote by $\text{Comb}[ ( \h{1},\h{2}), (\h{3},\h{4}),  \linebreak \ldots, (\h{2N-1},\h{2N})]$ all combs for a given sequence of input and output ports (characterized, e.g., by odd- and even-labeled Hilbert spaces respectively). We denote by $\mathcal{L}(\spc{H})$ linear operators on $\spc{H}$ and assume all Hilbert spaces to be finite dimensional.
	\begin{defi}[Quantum\ combs\ \cite{Chiribella2007,Chiribella2008,Chiribella2009}] \label{def_comb} A positive semidefinite $C\in\mathcal{L}(\bigotimes_{i=1}^{2N} \h{i})$ is a comb in $\text{Comb}[ ( \h{1},\h{2}), (\h{3},\h{4}), \ldots, (\h{2N-1},\h{2N})]$ if and only if there exist a sequence of positive semidefinite operators $C^{(k)}\in\mathcal{L}(\bigotimes_{i=1}^{2k} \h{i})$ ($k=1, \ldots, N-1$) such that:
		\begin{equation}	
		\label{eq:comb_normalisation}	
		\begin{aligned} 
		& & \tr_{2k}\left[ C^{(k)} \right] &= \mathds{1}_{2k-1} \otimes C^{(k-1)} \quad k=2, \ldots, N-1, \\
		& &  \tr_{2}\left[ C^{(1)} \right] &= \mathds{1}_1\quad \tr_{2N}\left[ C \right] = \mathds{1}_{2N-1} \otimes C^{(N-1)}.  \end{aligned}
		\end{equation}
	\end{defi}
The linear constraints in Eq.~(\ref{eq:comb_normalisation}) are due to normalization and causality. The constraints distinguish between input and output ports (as, for example, it is always the output ports that are traced out) and
ensure that the ports are causally ordered according to the labels. When $N=1$, Eq.~(\ref{eq:comb_normalisation}) reduces to a simple normalization constraint $\tr_{2}\left[ C \right] = \mathds{1}_1$, and $\text{Comb}[ ( \h{1},\h{2})]$ is just the collection of all quantum channels from $\spc{H}_1$ to $\spc{H}_2$.

	\begin{figure}[b]
		\centering
		\includegraphics[width=0.7\columnwidth]{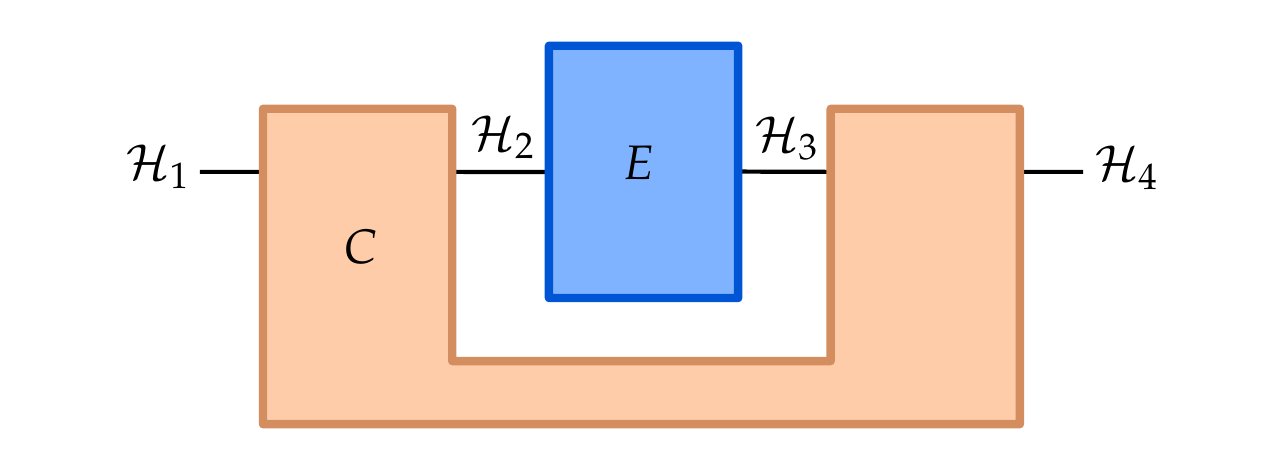}
		\caption{{\bf Link product.} The link product between a comb $C \in \text{Comb}[(\h{1},\h{2}),(\h{3},\h{4})]$ and a quantum channel $E \in \text{Comb}[(\h{2},\h{3})]$ results in a channel $E * F $ from $\h{1}$ to $\h{4}$.}
		\label{fig:link_product}
		\end{figure}

	Two combs can be interlaced with each other following the rule of the link product \cite{Chiribella2008, Chiribella2009}: Given two combs $E \in \mathcal{L}(\bigotimes_{m \in M} \h{m})$ and $F \in \mathcal{L}(\bigotimes_{n \in N} \h{n})$, the link product $E * F \in \mathcal{L}(\bigotimes_{s \in S} \h{s})$ for $S = (M \cup N) \backslash (M \cap N)$, which yields a new comb, is defined by
		\begin{equation}\label{eq:linkprod} E * F := \tr_{M \cap N} \left[ \left( E^{T_{M\cap N}} \otimes \mathds{1}_{N \backslash M}  \right) \left( \mathds{1}_{M \backslash N} \otimes F \right) \right], \end{equation} 
		where $T_{A}$ is the partial transpose performed on Hilbert spaces with indices in $A$.  An example is depicted in Fig.~\ref{fig:link_product}. Also note that in Fig.~\ref{fig:comb_probe} the link product $C\ast T$ yields a quantum state.

	For Markovian quantum metrology, the task is to find an optimal probe state so that, by inserting it to the parametrized channel, the output state has maximal QFI. Similarly, to estimate $\theta$ from a parametrized quantum comb $C_\theta$, the goal is to find an optimal \emph{probe} $T$ consisting of state preparation and control such that $C_\theta\ast T$ is a quantum state with maximal QFI. As depicted in Fig.~\ref{fig:comb_probe}, for $C_\theta\in\text{Comb}[ (\h{1},\h{2}), \dots, (\h{2N-1},\h{2N})]$, the probe should be  $T \in \text{Comb}[ (\emptyset, \h{1}), (\h{2},\h{3}), \linebreak \ldots, (\h{2N-2},\h{2N-1} \otimes \h{\text{aux}})]$, with $\h{\text{aux}}$ being an ancilla and $\emptyset$ denoting a trivial input space. Thus, the probe has input and output spaces complementary to the original quantum comb at the first $(N-1)$ teeth, and $C_\theta * T$ is a quantum state in $\h{2N}\otimes\h{\text{aux}}$.
	It is then natural to define the QFI of a quantum comb as \cite{Yang2019}: 
	\begin{equation} \label{eq:qfi_def}
		J(C_\theta) = \max_{T\in \text{Comb}[ (\emptyset, \h{1}),  \linebreak \ldots, (\h{2N-2},\h{2N-1} \otimes \h{\text{aux}})]} J( C_\theta * T),
	\end{equation}
	where, on the right-hand side, $J(\rho_\theta)$ is the QFI of a quantum state $\rho_\theta$ \cite{Helstrom1969,Holevo2011}. 
	
	The QFI  gives a lower bound on the variance of any unbiased estimator $\hat{\theta}$ via the Cram\'{e}r-Rao bound \cite{Helstrom1969, Holevo2011}, which can be extended to the comb setting \cite{Yang2019}:
		\begin{equation}\label{comb-crb}
		\text{Var}(\hat{\theta}) \geq \frac{1}{\nu J(C_\theta)}
		\end{equation}
	with $\nu$ being the number of times that the experiment is repeated. When $\theta$ is a single parameter, the bound is known to be achievable \cite{Helstrom1969, Holevo2011}.

    The bound (\ref{comb-crb}) establishes the comb QFI $J(C_\theta)$ as the pivotal quantity  that determines the ultimate precision limit. Our main result consists of an analytical formula for $J(C_\theta)$ and an algorithm that computes it using semidefinite programming. We start by  introducing the key notions:
	\begin{defi}[Comb conditional min-entropy \cite{Chiribella2016}]
	 For any $C \in \text{Comb}[(\h{1},\h{2}), \ldots, \linebreak (\h{2N-1},\h{2N})]$, the min-entropy of the $N$th tooth $(\h{2N-1},\h{2N})$ conditioned on the first $(N-1)$ teeth $[N-1] := (\h{1},\h{2}), \ldots, (\h{2N-3},\h{2N-2})$ is defined as
		\begin{equation}  \label{eq:hmincomb}
		\begin{aligned}
		& H_{\text{min}}(N | [N-1])_C := \\
		& - \log_2 \min_{S} \{ \lambda \in \mathbb{R} ~|~ \mathds{1}_{2N-1,2N} \otimes \lambda S \succeq C \},
	\end{aligned}
\end{equation}			
	where $S \in \text{Comb}[(\h{1},\h{2}), \ldots, (\h{2N-3},\h{2N-2})]$.
	\end{defi}
	  Fujiwara and Imai \cite{Fujiwara2008} evaluated the QFI of a quantum state $\rho_\theta$ via \emph{ensemble decompositions}, which are vectors $\{|\phi_{\theta,i}\>\}$ such that $\rho_\theta=\sum_{i}|\phi_{\theta,i}\>\<\phi_{\theta,i}|$. Note that $\{|\phi_{\theta,i}\>\}$ are not required to be orthonormal and thus ensemble decompositions are not unique. Then the QFI of $\rho_\theta$ is determined by the operators $\sum_i|\dot{\phi}_{\theta,i}\>\<\dot{\phi}_{\theta,i}|$,  where the dot stands for the partial derivative with respect to $\theta$. 
	Here (see Ref.~\cite{SM} for details) we extend the notion to combs:
	Since the Choi state of a quantum comb $C_\theta$ is Hermitian, we can find unnormalized vectors $\lket{C_{\theta,i}}$ such that
		\begin{equation} \label{eq:ensemble}
		C_\theta = \sum_{i=1}^q \lket{C_{\theta,i}} \lbra{C_{\theta,i}},
		\end{equation}
	where $q\ge r := \max_\theta\text{rank}(C_{\theta})$. We then define the performance operator of metrology as
		\begin{equation} 
		\Omega_\theta(h) := 4 \sum_{i=1}^q \left(\lket{\dot{\tilde{C}}_{\theta,i}} \lbra{\dot{\tilde{C}}_{\theta,i}}\right)^{T_{1\dots 2N-1}}. 
		\end{equation}
		Here  $\lket{\dot{\tilde{C}}_{\theta,i}}$ is given by
		$\lket{\dot{\tilde{C}}_{\theta,i}}  = \lket{\dot{C}_{\theta,i}} - \im \sum_{j=1}^q h_{ij} \, \lket{C_{\theta,j}}$. The Hermitian matrix $h$ is of dimension $q$ for any $q\ge r$, which captures the nonuniqueness of the ensemble decomposition. One can show that (see below) it is sufficient to consider $q=r$.
	 With these notions we have:
	\begin{theo}
	\label{theorem:qfi} Given a parametrized family of combs $\{C_\theta\}_\theta$ with $C_\theta \in \text{Comb}[(\h{1},\h{2}), \ldots, (\h{2N-1},\h{2N})]$, the QFI of the comb at $\theta$ is given as
		\begin{equation} \label{eq:qfi_result}
			J(C_\theta) =	 d_{2N}  \min_{h \in \text{Herm}(\mathbb{C}^{r})} 2^{- H_{\min}(N | [N-1])_{\Omega_\theta(h)}}
		\end{equation}
	with $d_{2N} := \dim(\h{2N})$ and $\text{Herm}(\mathbb{C}^{r})$ being the set of $r$-dimensional Hermitian matrices.
	\end{theo}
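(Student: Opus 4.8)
The plan is to rewrite $J(C_\theta)$ as a maximization over probes of a minimization over Fujiwara--Imai gauges, interchange the two with a minimax theorem, and then recognize the resulting probe optimization as the semidefinite program that defines the comb conditional min-entropy. First I would reduce to \emph{pure} probes, i.e.\ $T=\lket{T}\lbra{T}$: any probe comb dilates to a pure probe on an enlarged ancilla $\h{\text{aux}}'\supseteq\h{\text{aux}}$, and since tracing out the surplus ancilla is a channel, monotonicity of the QFI under data processing shows the supremum in \eqref{eq:qfi_def} is already attained on pure probes. I also use the standard converse, that $T\mapsto\Tr_{\h{\text{aux}}}T$ sends pure probes onto all of $\text{Comb}[(\emptyset,\h1),(\h2,\h3),\ldots,(\h{2N-2},\h{2N-1})]$.

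Fix a pure probe. Combining the link product \eqref{eq:linkprod} with the ensemble decomposition \eqref{eq:ensemble} gives $C_\theta\ast T=\sum_{i=1}^{q}\lket{\phi_{\theta,i}}\lbra{\phi_{\theta,i}}$, where $\lket{\phi_{\theta,i}}$ is the contraction of $\lket{C_{\theta,i}}$ with $\lket{T}$ over the shared wires $\h1\otimes\cdots\otimes\h{2N-1}$; since this contraction is linear and $T$ is $\theta$-independent, $\lket{\dot{\tilde\phi}_{\theta,i}}$ is the analogous contraction of $\lket{\dot{\tilde{C}}_{\theta,i}}$ with $\lket{T}$, \emph{for the same} $h$. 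Inserting this into the Fujiwara--Imai identity $J(\rho_\theta)=4\min_{h\in\text{Herm}(\mathbb C^{q})}\sum_i\lbraket{\dot{\tilde\phi}_{\theta,i}}{\dot{\tilde\phi}_{\theta,i}}$ and tracing out the contracted wires turns the objective into $\Tr[\Omega_\theta(h)\,(\sigma_T\otimes\mathds1_{2N})]$, where $\sigma_T$ is obtained from $\Tr_{\h{\text{aux}}}T$ by transposition and lies in $\text{Comb}[(\emptyset,\h1),\ldots,(\h{2N-2},\h{2N-1})]$ (the transpose is immaterial, since this comb set is transposition-invariant). This is the comb analogue of Fujiwara--Imai established in the SM. Together with the reduction to pure probes, $J(C_\theta)=\max_{\sigma}\min_{h}\Tr[\Omega_\theta(h)(\sigma\otimes\mathds1_{2N})]$, the maximum being over that comb set.

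For fixed $h$ the objective is linear in $\sigma$, and for fixed $\sigma$ it equals $4\sum_i\lVert\dot{\tilde\phi}_{\theta,i}(h)\rVert^2$ for vectors affine in $h$, hence convex in $h$; the comb set is convex and compact (positive semidefinite, fixed trace) and $\text{Herm}(\mathbb C^{q})$ is convex, so Sion's minimax theorem gives $J(C_\theta)=\min_{h}\max_{\sigma}\Tr[\Omega_\theta(h)(\sigma\otimes\mathds1_{2N})]$. The inner maximization is an SDP over the spectrahedron cut out by the causality constraints \eqref{eq:comb_normalisation}; the plan is to form its Lagrange dual, in which the variable dual to the semidefinite constraint together with the multipliers enforcing \eqref{eq:comb_normalisation} reassemble into exactly an instance of the program \eqref{eq:hmincomb}. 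Slater's condition holds (full-rank combs exist), so strong duality gives $\max_{\sigma}\Tr[\Omega_\theta(h)(\sigma\otimes\mathds1_{2N})]=d_{2N}\,2^{-H_{\min}(N|[N-1])_{\Omega_\theta(h)}}$, and substituting into the minimax identity yields \eqref{eq:qfi_result}. Finally, that $q=r$ suffices is inherited from Fujiwara--Imai: enlarging the ensemble beyond $\operatorname{rank}C_\theta$ never lowers the minimum over $h$, and $r=\max_\theta\operatorname{rank}(C_\theta)$ bounds $\operatorname{rank}(C_\theta\ast T)$ for every $T$.

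I expect the last step to be the main obstacle: matching the probe-optimization SDP to the min-entropy SDP \eqref{eq:hmincomb} with the correct dimensional factor $d_{2N}$ requires dualizing the entire hierarchy of causality constraints in Definition \ref{def_comb} and carefully checking the strong-duality hypotheses, and it is precisely here that the particular form of \eqref{eq:hmincomb} --- combs $S$ on the first $N-1$ teeth, identity on the last tooth --- is what is needed rather than a naive conditional min-entropy. A secondary but genuine technicality is the ensemble-cardinality bookkeeping, in particular the behaviour of the Fujiwara--Imai minimum when $\operatorname{rank}C_\theta$ is not constant in $\theta$; this is handled exactly as in the state case.
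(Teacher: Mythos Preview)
Your proposal is correct and follows essentially the same route as the paper's proof: reduce to pure probes, apply Fujiwara--Imai to the output state so the QFI becomes $\max_\sigma\min_h\Tr[\Omega_\theta(h)(\sigma\otimes\mathds 1_{2N})]$ over probe combs $\sigma$, swap min and max by convexity--concavity, and then identify the inner maximization with the comb conditional min-entropy via SDP strong duality. The only cosmetic differences are that the paper argues pure probes via convexity of the QFI rather than dilation plus monotonicity, and it first derives the dual with $\h{2N}$ traced out (the form used in Algorithm~\ref{algo:qfi}) before redoing the dualization without that trace to obtain the $d_{2N}$ factor and the min-entropy expression; the reduction to $r\times r$ Hermitian $h$ is proved there as a separate lemma rather than deferred to Fujiwara--Imai.
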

The proof can be found in Ref.~\cite{SM}.
\if\citationorder1
\cite{Sidhu2019,hughston1993complete,Rockafellar1970,Chiribella2012,Watrous2018,Gutoski2012,
Molina2012,Nielsen2000,Ziman2005,Oreshkov2007,Krovi2007,CVX,Altherr2021}
\fi
Equation~(\ref{eq:qfi_result}) can be regarded as the \emph{ultimate} formula of QFI, which applies to both Markovian and non-Markovian processes: By setting $C_\theta$ to be the tensor product of $N$ identical Choi operators, Eq.~(\ref{eq:qfi_result}) yields the QFI corresponding to the optimal adaptive strategy of channel estimation (see Ref.~\cite{SM} for details).

In practice, it is  desirable to evaluate the comb QFI numerically. To this purpose, we develop Algorithm \ref{algo:qfi}, which takes as input any parametrized family of combs and outputs the comb QFI. The detailed derivation of Algorithm~\ref{algo:qfi} can be found in Ref.~\cite{SM}.  

	 \begin{algorithm}[h!] 
                \SetKwInOut{Input}{input}\SetKwInOut{Output}{output}

                \Input{ a parametrized family of quantum combs $C_\theta \in \text{Comb}[ (\h{1},\h{2}), \ldots, (\h{2N-1},\h{2N})]$.}
                \Input{the dimension $d_i $ of each $\h{i}$.}
                \Output{the comb QFI $J(C_\theta)$.}
                \BlankLine
                \Begin{
                \tcp{Variables for optimization.}
                	variables $\lambda \geq 0$, $h \in \text{Herm}(\mathbb{C}^r)$, \\ \quad $S^{(k)} \in \text{Herm}(\mathbb{C}^{\prod_{i=1}^{2k} d_i})$ ($1 \leq k \leq N-1$) \;
                	
                	
                	find an ensemble decomposition $\{ \lket{C_{\theta,i}} \}_{i=1}^r$ of $C_\theta$ \;
                	\For{$i=1$ \KwTo $r$}{	
                	$\lket{\dot{\tilde{C}}_{\theta,i}} = \lket{\dot{C}_{\theta,i}} - \im \sum_{j=1}^r h_{ij} \lket{C_{\theta,j}}$ \;
                	
                	\For{$m_{2N}=1$ \KwTo $d_{2N}$}{
                	$\lket{c_{i,m_{2N}}} = \sum_{m_1, \ldots, m_{2N-1}} \dot{\overline{\tilde{C}}}_{i,\theta}^{m_1 \dots m_{2N-1} m_{2N}} \lket{m_1 \dots m_{2N-1}}$ \footnote{$\overline{a}$ denotes the complex conjugate of $a$.}\;
                	}
                	}
                	\tcp{Create the performance operator $\Omega_\theta(h)$.}
                	$
	 		A = \left( \begin{array} {ccc|c}
	 		&  & &  \lbra{c_{1,1}} \\
	 		& \mathds{1}_{r\cdot d_{2N}} &  &  	\vdots \\
	 		 &  &  &   \lbra{c_{r,d_{2N}}} \\ \hline
	 		 \lket{c_{1,1}} & 
	 		 \dots & 
	 		 \lket{c_{r,d_{2N}}} & 
	 		\mathds{1}_{2N-1} \otimes \lambda S^{(N-1)} \end{array}\right)$ \;
		 		$\text{minimize}_{\lambda, h, S^{(k)}}$ \qquad $\lambda$ \;
		 		subject to  \quad $ A \succeq 0$,  \qquad $\tr_2\left[ S^{(1)} \right] == \, \mathds{1}_{1}$ \;
 				 \quad $\tr_{2k}\left[ S^{(k)} \right] == \mathds{1}_{2k-1} \otimes S^{(k-1)}$ ($2 \leq k \leq N-1$) \;
		}
		output $J(C_\theta) = 4 \lambda$\;
                \caption{Evaluating the QFI of a comb.}
                \label{algo:qfi}  
         \end{algorithm} 
 
\medskip
\noindent{\it Frequency estimation under non-Markovian noise.} 
	As an example, we now apply our framework to a specific case of frequency estimation under non-Markovian noise. The task is to estimate an unknown frequency $\omega$ , given $N$ sequential access to a qubit phase gate $e^{-\im Ht}$ with $H=\omega |1\>\<1|$. The whole process, however, is subject to non-Markovian noise, which can be modeled using  a circuit model: 
	At each of the $N$ steps, the system $\h{S}$ collides  with the same environment $\h{E}$ via a unitary interaction $U_{\text{int}}(\tau)$ that lasts for time $\tau$ (see Fig.~\ref{fig:variational_circuit}). The interaction steers information from the system into the environment, potentially reducing the accuracy of  estimation \cite{remark}.
	\if\citationorder1
	\cite{Ciccarello2013,Kretschmer2016}
	\fi
	Meanwhile, the accuracy is also influenced by memory effects, as information can also flow back from the environment to the system. Such a task is beyond the existing framework of quantum metrology: One may be tempted to work out the maximal achievable QFI via a brute-force approach, by optimizing over all possible input states and controls. However, even for small systems the brute-force approach may still be unrealistic. For example, adding an ancillary system makes the QFI higher in general, but due to the underlying memory of the environment the size of the ancillary system needed to achieve optimality is hard to determine.
	Theorem \ref{theorem:qfi}, on the other hand, involves optimization on a predetermined system and its efficiency is totally predictable.

	\begin{figure}
	\centering
	\includegraphics[width=\columnwidth]{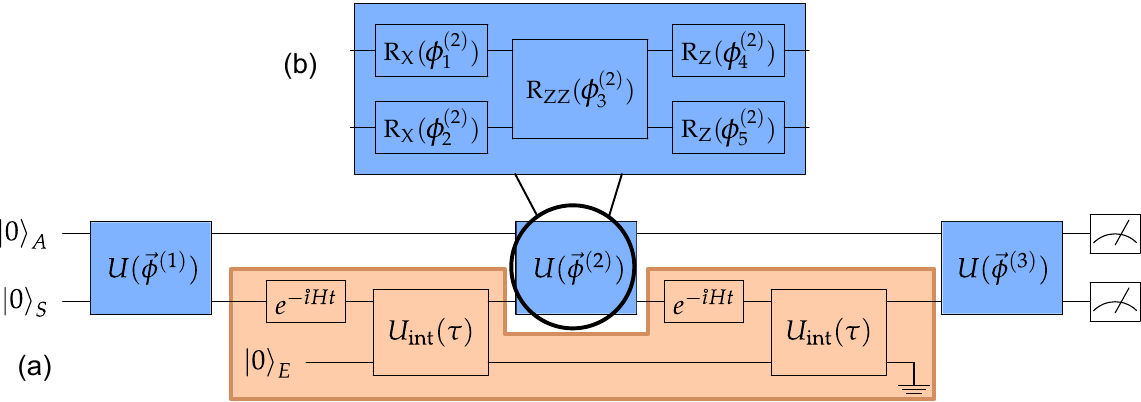}
	\caption{{\bf Noisy non-Markovian frequency estimation using a variational probe.} As shown in (a), the task is to estimate $\omega$ given $N$ accesses ($N=2$ in the figure), each of duration $t$, to a qubit system $S$ with $H=\omega|1\>\<1|$. The system is coupled to a qubit environment $E$ by an interaction $U_{\rm int}(\tau)$, where the interaction time $\tau$ depends on $t$. To estimate $\omega$ we construct a variational probe which consists of two-qubit control operations $U(\vec{\phi}^{(i)})$ for $i=1,2,\dots,N+1$ and projective measurements in the computational basis. In (b) we consider a specific variational probe: The rotations are $R_\sigma(\phi) = e^{-\im \frac{\phi}{2} \sigma}$ for $\sigma = X,Z$ and $R_{ZZ}(\phi) = e^{-\im \frac{\phi}{2} Z \otimes Z}$.  }
	\label{fig:variational_circuit}
	\end{figure}

	Let us now study one concrete case. We set both the system and the environment to be single qubit. For the interaction, we choose $U_{\text{int}}(\tau) = e^{-\im g \tau \,  \text{SWAP}}$ with interaction strength $g$ and  $\text{SWAP}(\ket{v}_S \otimes \ket{w}_E) = \ket{w}_S \otimes \ket{v}_E$ being the swap gate.  We apply Algorithm \ref{algo:qfi} to evaluate the comb QFI. For each round of the evaluation, we fix $N$ and a total time $t_{\text{tot}}$ and set $g=1, \omega = \pi/10$, $t = \tau = t_{\text{tot}} / N$. The environment is initiated to $|0\>$.  

	Plots of the comb QFI for different $N$ and different $t_{\text{tot}}$ can be seen in Fig.~\ref{fig:plots}. We compare between two strategies:  The red, dashed lines stand for the QFI of the ``control-free'' strategy, where the probe consists of an initial state preparation and identity channels between the steps, i.e., only the input state is optimized but no intermediate control is performed between the steps. On the other hand, the QFI of the optimal strategy, where both the initial state preparation and the intermediate control are optimized, equals the comb QFI (red, solid lines) by definition.
	From Fig.~\ref{fig:plots}, it is clear that quantum control between the steps can improve the accuracy, as there is a gap between the comb QFI (red, solid) and the control-free QFI (red, dashed). The gap becomes bigger when $N$ grows larger, as the non-Markovianity increases. For both strategies, the QFI of the noiseless scenario (black, dashed), which is equal to $t_{\rm tot}^2$, is recovered only if the interaction $ U_{\text{int}}(\tau)$ is trivial.

	We can also see the difference between Markovian and non-Markovian noises: In Fig.~\ref{fig:plots}, the blue, solid line corresponds to the maximal achievable QFI of the Markovian setting, where we apply the same interaction between the system and the environment but reset the environment at each step. In the Markovian setting, the whole process can be described by $N$ sequential quantum channels. One can see that the QFIs for both the optimal strategy (blue, solid) and the control-free strategy (blue, dashed) are lower than those for the non-Markovian case, matching a recent finding in Ref.~\cite{Wu2020}. Intuitively, the reason could be that, for the non-Markovian case, the information can be retrieved from the environment using proper control. This is a phenomenon that, without doubt, deserves further investigation.

	We also investigate the same task for other types of non-Markovian noises. These results can be found in Ref.~\cite{SM}.
	
	\begin{figure}[t!]
	\begin{center}
	\includegraphics[width=1\columnwidth]{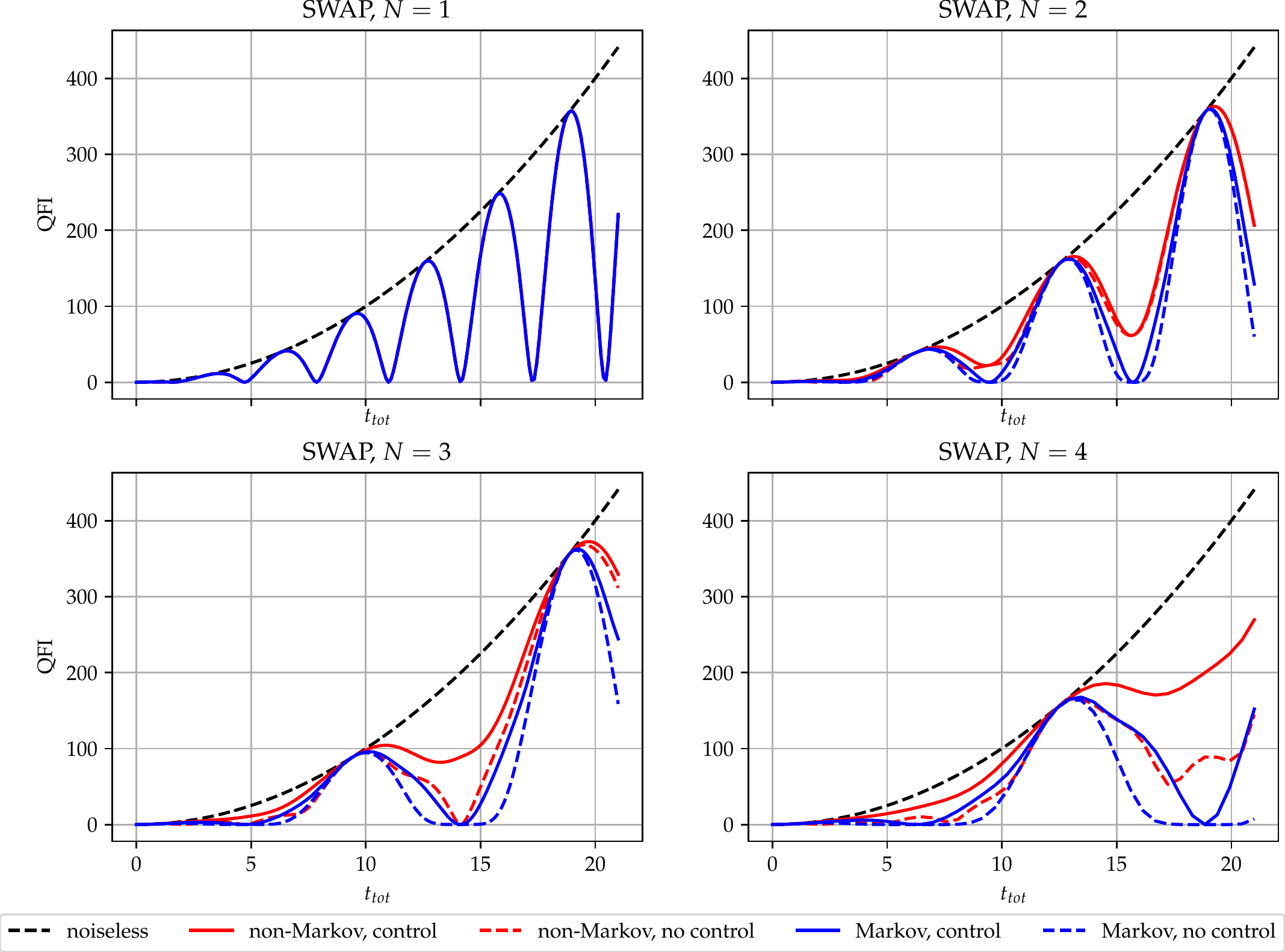}
	\caption{{\bf QFIs as a function of the total sampling time $t_{\text{tot}}$ for noisy frequency estimation.} We consider the comb depicted in Fig.~\ref{fig:variational_circuit} with the SWAP interaction for different value of $N$. As a comparison, we also consider the corresponding Markovian scenario, where the environment state is traced out and reset to $|0\>$ after each interaction. The non-Markovian scenario with optimal control (red, solid) allows for the highest QFI. In particular, the QFI is higher than those of the scenario without control (red, dashed) and its Markovian counterpart (blue, solid). The Markovian scenario without control  (blue, dashed) sees the worst performance. Notice that $N=1$ corresponds to the task of single channel estimation, where all scenarios coincide. When the interaction is trivial ($U_{\rm int}(\tau)= \pm \mathds{1}$), the noiseless scaling (black, dashed) is achieved in all scenarios. }
	\label{fig:plots}
	\end{center}
	\end{figure}

\medskip
\noindent{\it High-performance metrology with variational probes.} Practically, it is meaningful to consider whether the maximal QFI can be achieved by a relatively simple probe. Variational circuits, which promise a range of near-term applications thanks to its relatively simple structure, are ideal candidates. 
Consider the same estimation problem as the previous section.
As shown in Fig.~\ref{fig:variational_circuit} b), we construct a variational probe, which consists of a fixed arrangement (i.e. ansatz) of unitary gates and measurements in the computational basis. Each unitary gate is controlled by a few variables, and the optimal probe is obtained by gradually adjusting these variables to maximise the Fisher information of the output probability distribution. 

Concretely, we use a checkerboard ansatz \cite{Uvarov2020} with $5(N+1)$ parameters in total. From the numerical simulation [Fig.~\ref{fig:variational}], one can see that the performance of the variational probe is close to optimal, implying that it is enough to use it when $N$ is not large. The gap  to the comb QFI increases with $N$, suggesting that it may require more complex circuits to approach optimality for estimating processes with stronger non-Markovianity.

	
	\begin{figure}[t!]
	\begin{center}
	\includegraphics[width=\columnwidth]{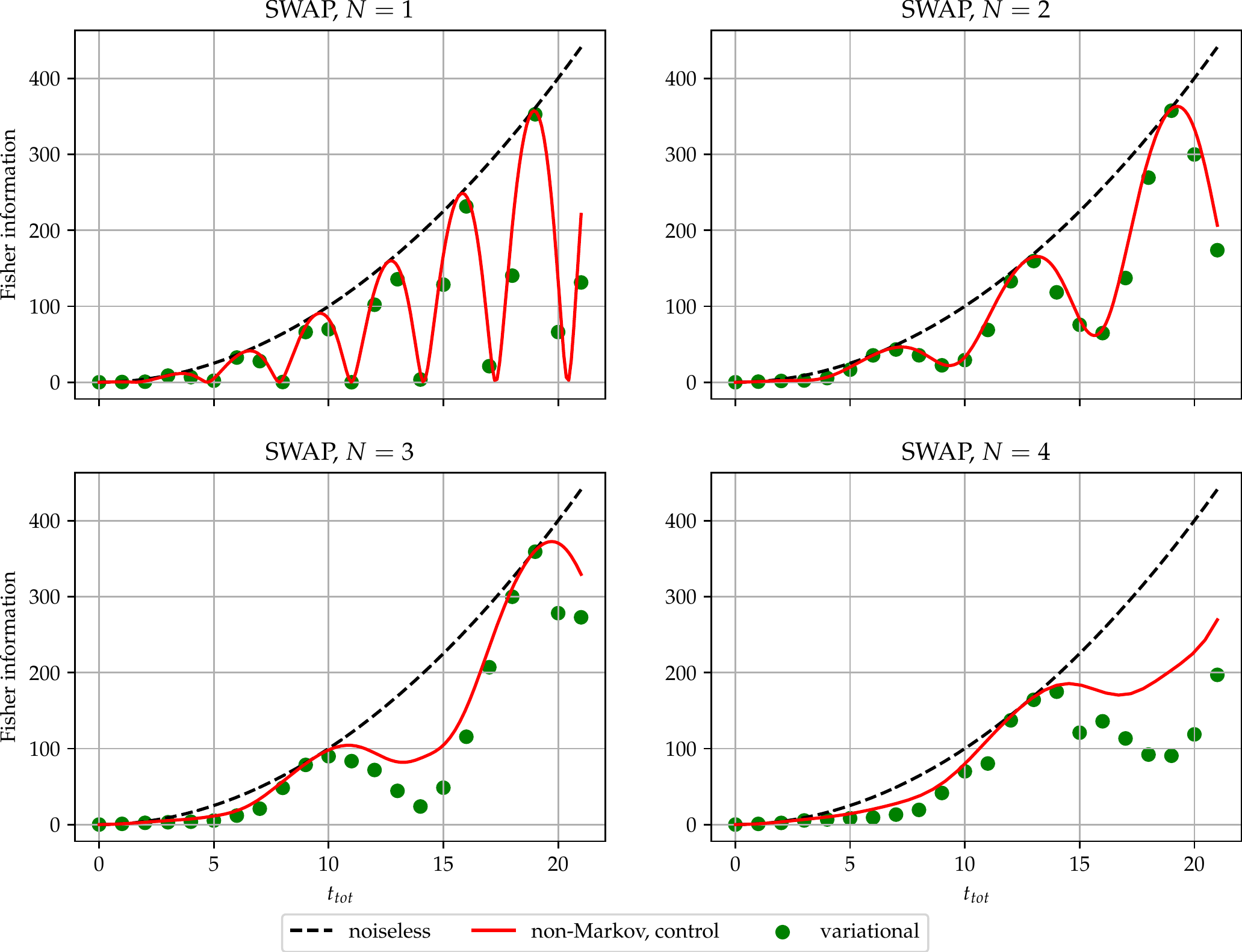}
	\caption{{\bf Performance of the variational probe.} Performance of the variational probe (blue part of Fig.~\ref{fig:variational_circuit}) is evaluated in terms of the Fisher information of the output probability distribution. Here the green dots are the Fisher information of the variational probe's output and the red, solid lines are the maximal attainable Fisher information.}
	\label{fig:variational}
	\end{center}
	\end{figure}

\medskip
\noindent{\em Conclusion.}    We established a fundamental framework of non-Markovian quantum metrology. To this end, we presented a general formula of the QFI of a quantum comb and designed an algorithm which evaluates the exact value of the comb QFI.  
	Our work opens up various directions for further research.
	First, Theorem \ref{theorem:qfi} and Algorithm \ref{algo:qfi} can be readily applied to any concrete task of non-Markovian metrology. 
	Second,	it is interesting to explore the asymptotic performance of non-Markovian metrology  using our framework and to capture unique performance limits such as the quantum Zeno limit \cite{Chin2012}. 
	Our method can also be extended to multi-parameter non-Markovian quantum metrology \cite{szczykulska2016multi,Demkowicz2020}, where the tradeoff between the precisions of estimating different parameters \cite{Holevo2011,PhysRevA.61.042312,ragy2016compatibility,Gessner2018,albarelli2019evaluating,yang2019attaining,suzuki2020quantum} plays a major role. 
	Last but not least, we note that an $N$ comb, in general, has degrees of freedom that scales exponentially in $N$. Nevertheless, in many practical scenarios, it is reasonable to assume that the comb and its performance operator have only ${\rm poly}(N)$ significant eigenvalues. It is then promising that an accurate and even more efficient approximate algorithm can be obtained from state-of-the-art methods of semidefinite programming, e.g., Ref.~\cite{yurtsever2021scalable}.

\medskip

We thank Giulio Chiribella, Tobias Sutter, and Sisi Zhou for helpful discussions.
This work is supported by the Swiss National Science Foundation via the National Center for Competence in Research ``QSIT" as well as via project No.\ 200021\_188541 and by the ETH Pauli Center for Theoretical Studies.

\bibliography{Masterthesis}   
\newpage

\begin{widetext}
\appendix

\section{Proof of Theorem 1} 
\label{sec:proof}
	
Here we prove Theorem 1 of the main text by expressing the QFI of a quantum comb as a semidefinite program and deriving its dual form. Note that we use a slightly different formulation of the primal problem that is useful for the numerical implementation. In Subsection \ref{sec:primal}, we express the QFI of a parametrised comb as a semidefinite programm. In Subsection \ref{sec:dualproblemN>1}, we derive the corresponding dual problem for $N > 1$, in Subsection \ref{sec:dual1}, we treat the case $N=1$. In Subsection~\ref{sec:equivalent_decomp}, we handle the minimisation over equivalent decompositions, and in Subsection \ref{sec:dual_minentropy}, we show the changes needed to obtain the dual problem in terms of the conditional min-entropy.
	
	\subsection{The comb QFI as an optimization problem}
	\label{sec:primal}
	In this subsection, we show that the QFI of quantum combs can be expressed as a semidefinite program.  The derivation uses a variational expression for the QFI of quantum states \cite[Theorem 1]{Fujiwara2008}, which we introduce here as the following lemma.
	
	\begin{lemma} \label{lem:fujiwara}
Let $\{\rho_\theta ~|~ \theta \in \Theta\}$ be a smooth curve of quantum states of constant rank $r$ and $q \geq r$ an arbitrary integer. The QFI of a state $\rho_{\theta}$ is
		\begin{align} \label{eq:qfi_state} J(\rho_\theta) = 4 \min_{ \{\ket{\psi_{\theta,i}} \} }
		\sum_{i=1}^q \tr\left[  \lket{\dot{\psi}_{\theta,i}} \lbra{ \dot{\psi}_{\theta,i}} \right]. \end{align}
		Here, the minimisation is over all $\{\ket{\psi_{\theta,i}}\}$  such that
		\begin{equation} \label{eq:decomposition}
		\rho_\theta = \sum_{i=1}^q \ket{\psi_{\theta,i}} \bra{\psi_{\theta,i}}.
		\end{equation}
\end{lemma}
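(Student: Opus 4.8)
The plan is to prove Lemma~\ref{lem:fujiwara} by establishing both inequalities between $J(\rho_\theta)$ and the right-hand side of Eq.~(\ref{eq:qfi_state}), working from the spectral (eigenvalue) decomposition of $\rho_\theta$ and the standard formula for the symmetric logarithmic derivative (SLD). First I would recall the textbook fact that for a full-rank-$r$ state $\rho_\theta = \sum_{k=1}^r p_k \lket{e_k}\lbra{e_k}$ (eigendecomposition, with all quantities $\theta$-dependent), the QFI equals $J(\rho_\theta) = 2\sum_{k,l:\, p_k+p_l>0} \frac{|\lbra{e_k}\dot\rho_\theta\lket{e_l}|^2}{p_k+p_l}$, equivalently $J(\rho_\theta)=\Tr[\rho_\theta L_\theta^2]$ where $L_\theta$ is the SLD solving $\dot\rho_\theta = \tfrac12(L_\theta\rho_\theta+\rho_\theta L_\theta)$. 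This is the anchor for both directions.

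For the upper bound (that the minimum over decompositions is $\le J(\rho_\theta)/4$), I would exhibit an explicit near-optimal ensemble. Take the ``canonical'' purification-style decomposition $\lket{\psi_{\theta,k}} = \sqrt{p_k}\,\lket{e_k}$ for $k\le r$ and $\lket{\psi_{\theta,i}}=0$ for $r<i\le q$; then differentiate. A direct computation gives $\sum_i \Tr[\lket{\dot\psi_{\theta,i}}\lbra{\dot\psi_{\theta,i}}] = \sum_k \big(\dot p_k^2/(4p_k) + p_k\lVert \dot{\lket{e_k}}\rVert^2\big)$, which in general overshoots; the point is that one is still free to add an arbitrary $\theta$-dependent ``gauge'' transformation — replacing $\lket{\psi_{\theta,i}}$ by $\sum_j V_{ij}(\theta)\lket{\psi_{\theta,j}}$ with $V(\theta)$ unitary (or, after padding to dimension $q$, with $V$ an isometry), which does not change $\rho_\theta$ but shifts $\lket{\dot\psi_{\theta,i}} \mapsto \lket{\dot\psi_{\theta,i}} + \sum_j \dot V_{ij}V^\dagger\cdots$, i.e. adds an anti-Hermitian generator $\lket{\psi_{\theta,j}}$ term. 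Optimizing over this generator (an anti-Hermitian $q\times q$ matrix $h$, exactly the freedom that reappears as the matrix $h$ in the main text) kills the off-diagonal SLD coherences and the extra phase freedom in each eigenvector, leaving precisely $\tfrac14 J(\rho_\theta)$. So the strategy is: write the objective for a general gauge, recognize it as a quadratic form in the generator, minimize, and match to the known QFI formula.

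For the lower bound (that every valid decomposition yields objective $\ge J(\rho_\theta)/4$), I would use the variational characterization of the SLD: for any Hermitian $X$, $J(\rho_\theta) \le \Tr[\rho_\theta X^2]$ whenever $\dot\rho_\theta = \tfrac12(X\rho_\theta+\rho_\theta X)$, with equality at $X=L_\theta$; more useful here is the Cauchy--Schwarz / operator-monotone argument showing $J(\rho_\theta) = \max_X \big(2\Tr[X\dot\rho_\theta] - \Tr[\rho_\theta X^2]\big)$. Given any decomposition $\{\lket{\psi_{\theta,i}}\}$, I would pick a test operator $X$ (e.g.\ built from the candidate SLD) and bound $\sum_i \lVert \dot{\lket{\psi_{\theta,i}}}\rVert^2$ from below by $2\,\mathrm{Re}\sum_i \lbra{\psi_{\theta,i}} X \dot{\lket{\psi_{\theta,i}}} - \tfrac14\sum_i \lbra{\psi_{\theta,i}} X^2\lket{\psi_{\theta,i}}$ (completing the square: $\lVert \dot{\lket{\psi}} - \tfrac12 X\lket{\psi}\rVert^2\ge 0$), then recognize $\sum_i \lket{\psi_{\theta,i}}\lbra{\psi_{\theta,i}} = \rho_\theta$ and $\frac{d}{d\theta}\sum_i\lket{\psi_{\theta,i}}\lbra{\psi_{\theta,i}} = \dot\rho_\theta$ to collapse the bound to $2\Tr[X\dot\rho_\theta] - \Tr[\rho_\theta X^2]$, and finally maximize over $X$ to get $J(\rho_\theta)$. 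Since this is cited as \cite[Theorem 1]{Fujiwara2008}, an alternative acceptable write-up is simply to quote the source; but I would include the completing-the-square argument since it is short and self-contained.

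The main obstacle I anticipate is bookkeeping rather than conceptual: carefully handling the padding from rank $r$ to arbitrary $q\ge r$ (ensuring the ``extra'' vectors and their derivatives can be taken to vanish, or are harmless), and making the gauge-freedom argument rigorous when $\rho_\theta$ may have a $\theta$-dependent kernel — one must check that differentiability of the curve $\rho_\theta$ of constant rank $r$ indeed guarantees a smooth choice of decomposition, which follows from smoothness of spectral projections under the constant-rank assumption. A secondary subtlety is that in the lower bound the optimal $X$ (the SLD) is only unique on the support of $\rho_\theta$; one has to argue the kernel part contributes nothing, again using constant rank. Once these points are dispatched, matching the two bounds gives the stated variational formula.
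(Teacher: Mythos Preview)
The paper does not actually prove this lemma: it is stated with an explicit citation to \cite[Theorem~1]{Fujiwara2008} and then used as a black box in the subsequent derivation of the comb QFI. So there is no ``paper's own proof'' to compare against; your proposal already goes beyond what the authors do.

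That said, your outline is sound and is essentially the standard argument. The lower bound via completing the square is exactly right in spirit (though watch the constants: from $\lVert \dot\psi_i - \tfrac12 X\psi_i\rVert^2\ge 0$ one gets $\sum_i\lVert\dot\psi_i\rVert^2 \ge \mathrm{Re}\sum_i\lbra{\psi_i}X\lket{\dot\psi_i} - \tfrac14\sum_i\lbra{\psi_i}X^2\lket{\psi_i}$, with a factor $1$, not $2$, in front of the $\mathrm{Re}$ term; this then matches $\tfrac14 J(\rho_\theta)$ after maximizing over $X$). The upper bound via the eigen-decomposition plus a unitary gauge optimization is also the right idea, and you correctly identify the anti-Hermitian generator freedom as the same $h$ that reappears later in the paper. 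For the present paper's purposes, simply citing Fujiwara--Imai would have sufficed, but your self-contained argument is a welcome addition.
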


	Now, consider a parametrised comb $C_\theta \in \text{Comb}[(\h{1},\h{2}), \ldots, (\h{2N-1}, \h{2N})]$ with $\theta \in \Theta$ and its QFI [Eq.~(3) in the main text]
	\begin{equation} \label{eq:qfi_appendix}
		J(C_\theta) = \max_{T} J( C_\theta * T)
	\end{equation}
	for $T \in \text{Comb}[(\emptyset, \h{1}),\ldots, (\h{2N-2},\h{2N-1}\otimes \h{\text{aux}})]$. We split the derivation into multiple lemmata.
	
	\begin{lemma}
	The QFI of a comb $C_\theta$ can be expressed as
	\begin{equation}
		J(C_{\theta}) = \max_{T} \min_{\{|C_{\theta,i}\> \}} \tr[ T \, (\mathds{1}_{\text{aux}} \otimes \tr_{2N}[\Omega_\theta])].
		\end{equation}
	\end{lemma}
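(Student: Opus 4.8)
The plan is to unfold the definition of the comb QFI, substitute Fujiwara--Imai's variational formula for the state QFI (Lemma~\ref{lem:fujiwara}), and then rewrite the resulting trace over the output state $C_\theta * T$ as a trace against the probe $T$. First I would start from Eq.~\eqref{eq:qfi_appendix}, $J(C_\theta) = \max_T J(C_\theta * T)$, and apply Lemma~\ref{lem:fujiwara} to the parametrized state $\rho_\theta := C_\theta * T$. This gives
\begin{equation}
J(C_\theta) = \max_T \; 4 \min_{\{\lket{\psi_{\theta,i}}\}} \sum_i \tr\!\left[ \lket{\dot\psi_{\theta,i}} \lbra{\dot\psi_{\theta,i}} \right],
\end{equation}
where $\{\lket{\psi_{\theta,i}}\}$ runs over ensemble decompositions of $C_\theta * T$. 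The key observation is that an ensemble decomposition of $C_\theta * T$ can be induced from an ensemble decomposition $\{\lket{C_{\theta,i}}\}$ of $C_\theta$ together with a purification/Choi vector of $T$: since the link product is (up to partial transpose and partial trace) a composition of the underlying CP maps, each $\lket{\psi_{\theta,i}}$ has the form obtained by linking $\lket{C_{\theta,i}}$ with $T$, and differentiating commutes with this linear operation since $T$ is $\theta$-independent. Up to the unitary (hence QFI-preserving) freedom captured by the Hermitian matrix $h$ in the definition of $\lket{\dot{\tilde C}_{\theta,i}}$, the inner minimization over decompositions of $C_\theta * T$ becomes a minimization over decompositions $\{\lket{C_{\theta,i}}\}$ of $C_\theta$.

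Next I would compute the objective $4\sum_i \tr[\lket{\dot\psi_{\theta,i}}\lbra{\dot\psi_{\theta,i}}]$ explicitly in terms of $T$ and the derivative Choi vectors. Writing out the link product and using $(A^{T_S})^{T_S} = A$ together with the cyclicity of the trace, the partial transpose on the spaces $1,\dots,2N-1$ that appears in the definition of $\Omega_\theta(h)$ is precisely what is needed to turn the ``sandwich'' of $T$ between $\lket{\dot{\tilde C}_{\theta,i}}$ and its dual into a plain trace $\tr[T \cdot X]$ for an appropriate operator $X$ built from $\sum_i \lket{\dot{\tilde C}_{\theta,i}}\lbra{\dot{\tilde C}_{\theta,i}}$. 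Since $T$ acts on all spaces except $\h{2N}$ (which is the output state's non-ancillary part) and carries the ancilla $\h{\text{aux}}$, the space $\h{2N}$ must be traced out and the identity on $\h{\text{aux}}$ inserted, yielding exactly $\tr[T\,(\mathds{1}_{\text{aux}} \otimes \tr_{2N}[\Omega_\theta(h)])]$ after absorbing the factor $4$ into the definition of $\Omega_\theta$. Here I am using that $\lbra{\dot\psi_{\theta,i}}(\cdot)\lket{\dot\psi_{\theta,i}}$ over the state spaces reduces, via the Choi-vector identity $\lbra{A}(M\otimes\mathds{1})\lket{A} = \tr[M^{T} A A^\dagger \text{-type}]$, to pairing $T$ against $\tr_{2N}$ of the performance operator.

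The main obstacle I anticipate is the bookkeeping of the partial transposes and of which tensor factors belong to $C_\theta$, to $T$, and to the ancilla: the link product Eq.~\eqref{eq:linkprod} involves a partial transpose on the contracted spaces, the performance operator carries a partial transpose on $1,\dots,2N-1$, and Fujiwara--Imai's formula involves a derivative of the (unnormalized) Choi vectors — getting the conjugates and transposes to cancel correctly so that the final expression is manifestly $\tr[T \cdot (\text{something positive in the remaining spaces})]$ requires care. A secondary subtlety is justifying that the min over ensemble decompositions of the composed state $C_\theta * T$ can be replaced by the min over decompositions of $C_\theta$ alone (parametrized by $h \in \text{Herm}(\mathbb{C}^q)$): one direction is immediate (any decomposition of $C_\theta$ induces one of $C_\theta * T$), and the other needs the fact, deferred to Ref.~\cite{SM}, that the induced decompositions already exhaust the relevant freedom because $T$ is an isometry-like object on the support. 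Once these two points are settled, the displayed identity of the lemma follows by collecting terms.
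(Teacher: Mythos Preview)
Your approach is essentially the paper's: apply Fujiwara--Imai to $C_\theta * T$, push the decomposition back to $C_\theta$, and unwind the link product into a trace against $T$ with the partial transpose producing $\Omega_\theta$. The one step you leave implicit but the paper makes explicit is the reduction to \emph{pure} probes $T = \lket{T}\lbra{T}$, justified by linearity of the link product and convexity of the state QFI; this is precisely what lets you write $C_\theta * T = \sum_i \lket{v_{\theta,i}}\lbra{v_{\theta,i}}$ with $\lket{v_{\theta,i}}$ given componentwise by contracting $\lket{C_{\theta,i}}$ against $\lket{T}$, and it is the real content behind your ``$T$ is isometry-like'' remark. With that in hand, the paper simply reads off the trace identity from the explicit component formula rather than invoking an abstract Choi-vector identity, which also disposes of the partial-transpose bookkeeping you anticipate.
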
	
	
	\begin{proof} We first fix a probe $T$
	 and consider a decomposition
	\begin{equation} \label{eq:ensemble_appendix}
		C_\theta = \sum_{i=1}^q \lket{C_{\theta,i}} \lbra{C_{\theta,i}}
		\end{equation}	
	with $q \geq \max \{ \text{rank}(C_\theta) ~|~ \theta \in \Theta \}$ components. 	
	Due to the linearity of the link product and the convexity of the QFI \cite{Sidhu2019}, the maximal QFI will be achieved for pure $T = \ket{T} \bra{T}$.\footnote{For general $T = \sum_l p_l \ket{T_l} \bra{T_l}$, Eq.~\eqref{eq:link_decomp} reads $C_\theta * T = \sum_l p_l \sum_i \ket{v_{\theta,il}}\bra{v_{\theta,il}}$ for an appropriate $\ket{v_{\theta,il}}$ (compare Eq.~\eqref{eq:link_vil}). Convexity of the QFI implies $J(C_\theta * T) \leq \sum_l p_l J( \sum_i \lket{v_{\theta,il}} \lbra{v_{\theta,il}}) = \sum_l p_l J( C_\theta * \ket{T_l}\bra{T_l})$. The maximal QFI is thus achieved for pure $T$. } Using the definition of the link product [Eq.~(2) of the main text], we can write	
		\begin{equation} \label{eq:link_decomp}
		C_\theta * T = \sum_{i=1}^q\ket{v_{\theta,i}} \bra{v_{\theta,i}},
		\end{equation}
		where $\ket{v_{\theta,i}}$ can be written in components of $\ket{C_{\theta,i}}$ and $\ket{T}$ as
		\begin{equation} \label{eq:link_vil}
		\ket{v_{\theta,i}} = \sum_{m_1, \ldots, m_{2N}, m_{\text{aux}}} C_{\theta,i}^{m_1 \dots m_{2N-1} m_{2N}} T^{m_1 \dots m_{2N-1} m_{\text{aux}}} \ket{m_{2N} m_{\text{aux}}}.
		\end{equation}
		
		Now, there arise some subtle issues regarding the existence of the QFI of a quantum comb that trace back to the existence of the QFI of a quantum state (Lemma~\ref{lem:fujiwara}). In particular, we assume:
	\begin{enumerate}
	\item The comb $C_\theta$ has a decomposition \eqref{eq:ensemble_appendix} with number of components equal to $r:=\max\{ \text{rank}(C_\theta) ~|~ \theta \in \Theta \}$ such that each component $\ket{C_{\theta,i}}$ is continuously differentiable.
	\item We assume that for a given probe $T$ the link product $C_\theta * T$ has constant rank for all $\theta \in \Theta$.
\end{enumerate}		
	Under these conditions, for each probe $T$ the set $\{C_\theta * T ~|~ \theta \in \Theta \}$ is a smooth curve of quantum states with constant rank that immediately allows for application of Lemma~\ref{lem:fujiwara}. Applying Eq.~\eqref{eq:qfi_state} to Eq.~\eqref{eq:link_decomp}, we find the QFI of the output to be
		\begin{equation} \label{eq:qfi_link}
		J(C_{\theta} * T) = \min_{\{|C_{\theta,i}\> \}} \tr[ T \, (\mathds{1}_{\text{aux}} \otimes \tr_{2N}[\Omega_\theta])],
		\end{equation}
	the minimisation is taken over all decompositions as in Eq.~\eqref{eq:ensemble_appendix} and we use the performance operator 
		\begin{equation}
		\Omega_\theta = 4 \sum_{i=1}^q \left(\lket{\dot{C}_{\theta,i}} \lbra{\dot{C}_{\theta,i}} \right)^{T_{1\dots 2N-1}}. 
		\end{equation}
		Including the maximization over all probes, we obtain the desired result.
		
	\end{proof}

	We now show how to deal with the minimization over equivalent decompositions.
	
	\begin{lemma}
	The QFI of a comb $C_\theta$ can be expressed as 
	\begin{equation} \label{eq:qfi_expr1}
		J(C_{\theta}) = \max_{T} \min_{h}  \tr[ T \, (\mathds{1}_{\text{aux}} \otimes \tr_{2N}[\Omega_\theta(h)])].
		\end{equation}	
	\end{lemma}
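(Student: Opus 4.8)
The plan is to start from the expression established in the previous lemma,
$J(C_\theta) = \max_T \min_{\{\ket{C_{\theta,i}}\}} \tr[T\,(\mathds{1}_{\text{aux}} \otimes \tr_{2N}[\Omega_\theta])]$,
in which the inner minimisation runs over all continuously differentiable ensemble decompositions $C_\theta = \sum_{i=1}^q \ket{C_{\theta,i}}\bra{C_{\theta,i}}$ with $q$ components and $\Omega_\theta = 4\sum_i (\ket{\dot{C}_{\theta,i}}\bra{\dot{C}_{\theta,i}})^{T_{1\dots 2N-1}}$ is the performance operator attached to that decomposition. Since the objective is linear in $\Omega_\theta$, it suffices to prove the purely local statement: having fixed \emph{one} reference decomposition $\{\ket{C_{\theta,i}}\}_{i=1}^q$, as $\{\ket{C'_{\theta,i}}\}_{i=1}^q$ ranges over all $C^1$ decompositions of $C_\theta$ with $q$ components, the attached performance operator ranges over exactly $\{\Omega_\theta(h)\mid h\in\text{Herm}(\mathbb{C}^q)\}$, where $\Omega_\theta(h)$ is built from the reference via $\ket{\dot{\tilde{C}}_{\theta,i}} = \ket{\dot{C}_{\theta,i}} - \im\sum_j h_{ij}\ket{C_{\theta,j}}$. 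Inserting $\min_h$ for the minimisation over decompositions then gives the lemma.

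For the forward inclusion I would invoke the unitary freedom in ensemble decompositions \cite{hughston1993complete}: any two $q$-component decompositions of the same positive operator are related by a $q\times q$ unitary, so $\ket{C'_{\theta,i}} = \sum_j V_{\theta,ij}\ket{C_{\theta,j}}$ for some $V_\theta$, which (see the obstacle below) can be arranged differentiable near the evaluation point. Differentiating and using $V_\theta V_\theta^\dagger=\mathds{1}$ gives $\ket{\dot{C}'_{\theta,i}} = \sum_k V_{\theta,ik}\bigl(\ket{\dot{C}_{\theta,k}} + \sum_j (V_\theta^\dagger\dot{V}_\theta)_{kj}\ket{C_{\theta,j}}\bigr)$; since $V_\theta^\dagger V_\theta=\mathds{1}$ the matrix $V_\theta^\dagger\dot{V}_\theta$ is anti-Hermitian, so writing $V_\theta^\dagger\dot{V}_\theta = -\im h$ with $h$ Hermitian identifies the inner bracket with $\ket{\dot{\tilde{C}}_{\theta,k}}$ exactly as in the statement. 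Summing the rank-one terms and using $V_\theta^\dagger V_\theta=\mathds{1}$ once more gives $\sum_i \ket{\dot{C}'_{\theta,i}}\bra{\dot{C}'_{\theta,i}} = \sum_k \ket{\dot{\tilde{C}}_{\theta,k}}\bra{\dot{\tilde{C}}_{\theta,k}}$, and applying the linear partial transpose together with the factor $4$ shows that the performance operator of $\{\ket{C'_{\theta,i}}\}$ equals $\Omega_\theta(h)$. For the reverse inclusion, given any Hermitian $h$, conjugating the reference decomposition by the unitary curve $\vartheta\mapsto e^{-\im(\vartheta-\theta)h}$ produces a $C^1$ decomposition of $C_\vartheta$ for every $\vartheta$ which agrees with the reference at the evaluation point $\theta$ and has logarithmic derivative $-\im h$ there, hence realises $\Omega_\theta(h)$. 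Combining the two inclusions and substituting into the previous lemma proves the claim.

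The step I expect to be the main obstacle is the differentiability claim in the forward inclusion. The unitary-freedom theorem only delivers $V_\theta$ pointwise, and when $q>r$ it is fixed only up to the stabiliser of $C_\theta$; one has to check both that $\vartheta\mapsto V_\vartheta$ (equivalently, its value and first derivative at the evaluation point) can be chosen compatibly with the $C^1$ structure assumed of the decompositions, and that the residual gauge freedom does not generate performance operators outside $\{\Omega_\theta(h)\}$. This is precisely the flavour of the ``subtle issues'' around the existence of the QFI flagged earlier, and it — rather than the algebra of the second paragraph, which is routine — is where the real care is needed. A safer route may be to argue entirely at the evaluation point: show that, once the value $\{\ket{C_{\theta,i}}\}$ is fixed, the first derivatives $\{\ket{\dot{C}_{\theta,i}}\}$ attainable from some $C^1$ decomposition curve form exactly the orbit $\ket{\dot{C}_{\theta,i}}\mapsto\ket{\dot{C}_{\theta,i}}-\im\sum_j h_{ij}\ket{C_{\theta,j}}$ with $h\in\text{Herm}(\mathbb{C}^q)$, and that changing the value only conjugates everything by a unitary, leaving $\Omega_\theta$ invariant.
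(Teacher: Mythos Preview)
Your proposal is correct and follows essentially the same route as the paper: both invoke the HJW unitary freedom between $q$-component decompositions, differentiate the relation $\tilde{A}_\theta = A_\theta V_\theta$ (the paper packages the decomposition vectors as columns of $A_\theta$, but the algebra is identical to yours), and observe that the performance operator depends on $V_\theta$ only through the Hermitian matrix $h$ built from $\dot{V}_\theta V_\theta^{\dagger}$. The differentiability of $\theta\mapsto V_\theta$ that you single out as the main obstacle is in fact glossed over in the paper's own proof, so your treatment is, if anything, more careful on this point than the paper itself.
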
	
	
	\begin{proof}
	 Let us fix an ensemble decomposition $\{ \ket{C_{\theta,i}}\}$ of size $r = \max\{ \text{rank}(C_\theta) ~|~ \theta \in \Theta \}$. We represent $\ket{C_{\theta,i}}$ as columns vectors of $A_\theta^{(0)} := [ \ket{C_{\theta,1}}, \ldots, \ket{C_{\theta,r}}]$ and extend it to an ensemble decomposition of size $q\geq r$ by considering $A_\theta := A_\theta^{(0)} I_{r,q}$ where $I_{r,q} = [ \mathds{1}_r ~|~ 0_{r,q-r}]$ and $0_{r,q-r}$ denotes the $r \times (q-r)$ zero matrix. Arbitrary ensemble decompositions with $q$ components can be related to $A_\theta$ via a unitary $V_\theta$: $\tilde{A}_\theta = A_\theta V_\theta$ \footnote{For arbitrary ensemble decompositions $\sum_{i=1}^q\ket{A_i}\bra{A_i}$ and $\sum_{i=1}^q\ket{B_i}\bra{B_i}$ of a density matrix $\rho$, consider the corresponding bipartite states $\ket{\psi^{A/B}}:=\sum_{i=1}^q\ket{A_i/B_i}\ket{i}$, where $\{\ket{i}\}$ is an orthonormal basis. Since $\ket{\psi^A}\bra{\psi^A}$ and $\ket{\psi^B}\bra{\psi^B}$ are two purifications of the same density matrix, there exists a unitary $V$ such that $(I\otimes V)\ket{\psi^A}=\ket{\psi^B}$ (known as the HJW purification theorem \cite{hughston1993complete}) and we get the desired relation between the two ensemble decompositions. The argument can be easily extended to $\rho$ with non-unit trace.}. The performance operator can be written as
		$\Omega_\theta = 4 (\dot{\tilde{A}}_\theta \dot{\tilde{A}}_{\theta}^\dagger)^{T_{1 \ldots 2N-1}}$ and the expression in brackets evaluates to
	\begin{align}
	\dot{\tilde{A}}_{\theta} \dot{\tilde{A}}_{\theta}^\dagger = (\dot{A}_{\theta} V_\theta + A_\theta \dot{V}_{\theta})( V_\theta^\dagger \dot{A}_{\theta}^\dagger + \dot{V}_{\theta}^\dagger A_{\theta}^\dagger) =
	\dot{A}_{\theta} \dot{A}_{\theta}^\dagger + 
	A_{\theta} \dot{V}_{\theta} V_{\theta}^\dagger \dot{A}_{\theta}^\dagger - \dot{A}_{\theta} \dot{V}_{\theta} V_{\theta}^\dagger A_\theta^\dagger - A_\theta \dot{V}_{\theta} V_\theta^\dagger \dot{V}_{\theta} V_\theta^\dagger A_\theta^\dagger,
	\end{align}		
	where we use $V_\theta^\dagger V_\theta = \mathds{1} = V_\theta V_\theta^\dagger$ and $V_{\theta} \dot{V}_\theta^\dagger = - \dot{V}_\theta V_\theta^\dagger$.
	 Notice in the above expression that the dependency on $V_\theta$ enters only via the Hermitian matrix $h :=\im\dot{V}_\theta V_\theta^\dagger$. Equivalently we can write
	\begin{align} \label{eq:aadagger}
	\dot{\tilde{A}}_{\theta} \dot{\tilde{A}}_{\theta}^\dagger = 
	(\dot{A}_{\theta} - \im A_\theta h)(\dot{A}_{\theta} - \im A_\theta h)^\dagger .
	\end{align}	 	
	It is thus sufficient to consider derivatives of the form	
 		\begin{align} \label{eq:ntilde} 
 		\lket{\dot{\tilde{C}}_{\theta,j}} &= \lket{\dot{C}_{\theta,j}} - \im \sum_{j=1}^q h_{ji} \, \lket{C_{\theta,i}}
 		\end{align}
 		
 	and we write
 		\begin{align}
 		\Omega_\theta(h)&= 4\sum_{j=1}^q \left(|\dot{\tilde{C}}_{\theta,j}\>\<\dot{\tilde{C}}_{\theta,j}|\right)^{T_{1 \dots 2N-1}}
 		\end{align}
 		for the corresponding performance operator.
 		The QFI can then be expressed as
 		\begin{equation} \label{eq:qfi_expr1}
		J(C_{\theta}) = \max_{T} \min_{h}  \tr[ T \, (\mathds{1}_{\text{aux}} \otimes \tr_{2N}[\Omega_\theta(h)])].
		\end{equation}	
		\end{proof}

	Next, we show that it is sufficient to consider Hermitian matrices of size $r \times r$.	
		
	\begin{lemma} The minimum over $h$ in Eq.~\eqref{eq:qfi_expr1} is achieved for a $r \times r$ Hermitian.
	\end{lemma}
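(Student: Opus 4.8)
The plan is to show that extending the Hermitian parameter matrix $h$ from size $r\times r$ to size $q\times q$ cannot decrease the objective in Eq.~\eqref{eq:qfi_expr1}, so the minimum is always attained on the $r\times r$ block. First I would recall the structure established in the previous lemma: a size-$q$ ensemble decomposition is $\tilde A_\theta = A_\theta^{(0)} I_{r,q} V_\theta$ with $V_\theta$ a $q\times q$ unitary, and the whole $V_\theta$-dependence of the performance operator enters only through $h = \im \dot V_\theta V_\theta^\dagger$ via Eq.~\eqref{eq:aadagger}, namely $\dot{\tilde A}_\theta \dot{\tilde A}_\theta^\dagger = (\dot A_\theta - \im A_\theta h)(\dot A_\theta - \im A_\theta h)^\dagger$, where now $A_\theta = A_\theta^{(0)} I_{r,q}$ has its last $q-r$ columns equal to zero.

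The key observation is that $A_\theta$ has $q-r$ zero columns, so $A_\theta h$ only sees the first $r$ rows of $h$, i.e. the $r\times q$ block $h' := I_{r,q}\, h\, I_{r,q}^\dagger$ times appropriate selectors — more precisely $A_\theta h = A_\theta^{(0)} (I_{r,q} h)$, and $I_{r,q} h$ is the $r\times q$ matrix consisting of the first $r$ rows of $h$. Likewise $\dot A_\theta = \dot A_\theta^{(0)} I_{r,q}$ has its last $q-r$ columns zero. Hence writing $D := \dot A_\theta - \im A_\theta h$, the last $q - r$ columns of $D$ are $-\im A_\theta^{(0)} (\text{last } q-r \text{ columns of } I_{r,q} h)$, which are precisely $-\im A_\theta^{(0)} B$ where $B$ collects the $(i,j)$ entries of $h$ with $i\le r < j$. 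The first $r$ columns of $D$ are $\dot A_\theta^{(0)} - \im A_\theta^{(0)} h^{(0)}$ with $h^{(0)}$ the top-left $r\times r$ block of $h$. Therefore
\begin{equation}
D D^\dagger = \bigl(\dot A_\theta^{(0)} - \im A_\theta^{(0)} h^{(0)}\bigr)\bigl(\dot A_\theta^{(0)} - \im A_\theta^{(0)} h^{(0)}\bigr)^\dagger + A_\theta^{(0)} B B^\dagger A_\theta^{(0)\dagger}.
\end{equation}
Since $A_\theta^{(0)} B B^\dagger A_\theta^{(0)\dagger} \succeq 0$, we get $DD^\dagger \succeq D^{(0)} D^{(0)\dagger}$ where $D^{(0)} := \dot A_\theta^{(0)} - \im A_\theta^{(0)} h^{(0)}$ corresponds to using the $r\times r$ matrix $h^{(0)}$. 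Partial transpose is a positive (linear) map and so preserves the ordering, hence $\Omega_\theta(h) \succeq \Omega_\theta(h^{(0)})$ as operators; taking $\tr_{2N}$, then tensoring with $\mathds1_{\text{aux}}$, then $\tr[T\,\cdot\,]$ against the positive semidefinite $T$ all preserve the inequality, so $\tr[T(\mathds1_{\text{aux}}\otimes\tr_{2N}[\Omega_\theta(h)])] \ge \tr[T(\mathds1_{\text{aux}}\otimes\tr_{2N}[\Omega_\theta(h^{(0)})])]$. Thus for any $q\times q$ Hermitian $h$ there is an $r\times r$ Hermitian $h^{(0)}$ with objective no larger, and conversely every $r\times r$ Hermitian embeds as a $q\times q$ one; the two infima coincide and the $r\times r$ minimum suffices.

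The main obstacle is largely bookkeeping: being careful about which indices of the larger matrix $h$ actually couple to the nonzero columns of $A_\theta$, and making sure the cross terms between the ``old'' $r$ columns and the ``new'' $q-r$ columns vanish in $DD^\dagger$ — they do, precisely because $\dot A_\theta$ and $A_\theta$ share the same zero-column pattern coming from $I_{r,q}$, so the new columns of $D$ are orthogonal (in the matrix-product sense contributing to $DD^\dagger$) only via the extra positive term, with no sign-indefinite cross contribution. One should also note the mild regularity caveat already present in the excerpt (constant rank $r$, differentiable decomposition), under which $A_\theta^{(0)}$ and $\dot A_\theta^{(0)}$ are well-defined; no new assumptions are needed. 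A complementary, perhaps cleaner way to phrase the same argument: the size-$q$ problem's feasible set of performance operators contains the size-$r$ one (embed $h^{(0)}\mapsto h^{(0)}\oplus 0$), and the computation above shows it is in fact ``$\succeq$-cofinal below'' it, so minimizing a monotone functional gives equality.
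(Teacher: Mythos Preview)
Your approach is essentially identical to the paper's: both split the $q\times q$ Hermitian $h$ into its top-left $r\times r$ block (your $h^{(0)}$, the paper's $h_r$) and the off-diagonal $r\times(q-r)$ block (your $B$, the paper's $h_2$), and both arrive at the same key identity
\[
\dot{\tilde A}_\theta\dot{\tilde A}_\theta^{\dagger} \;=\; (\dot A_\theta^{(0)} - \im A_\theta^{(0)} h_r)(\dot A_\theta^{(0)} - \im A_\theta^{(0)} h_r)^{\dagger} \;+\; A_\theta^{(0)}\, h_2 h_2^{\dagger}\, (A_\theta^{(0)})^{\dagger},
\]
with the second term positive semidefinite. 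Your column-by-column derivation of this identity is a clean alternative to the paper's expansion via $I_{r,q}h^2 I_{r,q}^{\dagger}=h_r^2+h_2h_2^{\dagger}$.

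There is, however, one incorrect step in your write-up. You assert that ``partial transpose is a positive (linear) map and so preserves the ordering, hence $\Omega_\theta(h) \succeq \Omega_\theta(h^{(0)})$''. Partial transpose is the canonical example of a map that is \emph{not} positive, and the operator inequality $\Omega_\theta(h)\succeq\Omega_\theta(h^{(0)})$ does not follow (and need not hold). What rescues the argument is that the objective in Eq.~\eqref{eq:qfi_expr1} depends only on $\tr_{2N}[\Omega_\theta(h)]$: since partial trace over $\spc{H}_{2N}$ commutes with the transpose on $1,\dots,2N-1$, one has $\tr_{2N}[(X)^{T_{1\dots 2N-1}}]=(\tr_{2N}[X])^{T}$, a \emph{full} transpose on the remaining spaces, which does preserve positive semidefiniteness. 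Hence $\tr_{2N}[\Omega_\theta(h)]\succeq \tr_{2N}[\Omega_\theta(h^{(0)})]$, and then tracing against $T\succeq 0$ gives the desired monotonicity. With this correction your proof is complete and agrees with the paper's.
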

	
	\begin{proof} We insert $A_\theta = A_\theta^{(0)} I_{r,q}$ into Eq.~\eqref{eq:aadagger} and find	
	\begin{align}
	\dot{\tilde{A}}_{\theta} \dot{\tilde{A}}_{\theta}^\dagger = 
	\dot{A}_{\theta}^{(0)} I_{r,q} I_{r,q}^\dagger (\dot{A}_{\theta}^{(0)})^\dagger  
	-\im A_{\theta}^{(0)} I_{r,q} h I_{r,q}^\dagger (\dot{A}_{\theta}^{(0)})^\dagger + \im \dot{A}_{\theta}^{(0)} I_{r,q} h I_{r,q}^\dagger (A_\theta^{(0)})^\dagger + A_\theta^{(0)} I_{r,q} h^2 I_{r,q}  (A_\theta^{(0)})^\dagger,
	\end{align}	
	We decompose 
		\begin{align}
		h = \begin{pmatrix} h_r & h_{2} \\ h_{2}^\dagger & h_{q-r} \end{pmatrix}
		\end{align}
	into a $r \times r$ Hermitian $h_r$, a $(q-r) \times (q-r)$ Hermitian $h_{q-r}$ and a $r \times (q-r)$ matrix $h_{2}$. We note that $I_{r,q} I_{r,q}^\dagger = \mathds{1}_r$, $I_{r,q} h I_{r,q}^\dagger = h_{r}$, $I_{r,q} h^2 I_{r,q}^\dagger = h_r^2 + h_{2} h_{2}^\dagger$ and thus
	\begin{align}
	\dot{\tilde{A}}_{\theta} \dot{\tilde{A}}_{\theta}^\dagger =& 
	\dot{A}_{\theta}^{(0)} (\dot{A}_{\theta}^{(0)})^\dagger  
	-\im A_{\theta}^{(0)} h_r (\dot{A}_{\theta}^{(0)})^\dagger + \im \dot{A}_{\theta}^{(0)}h_r (A_\theta^{(0)})^\dagger + A_\theta^{(0)} (h_r^2 + h_2 h_2^\dagger)  (A_\theta^{(0)})^\dagger \\
	=& (\dot{A}_{\theta}^{(0)} - \im A_{\theta}^{(0)} h_r) (\dot{A}_{\theta}^{(0)} - \im A_{\theta}^{(0)} h_r)^\dagger + A_\theta^{(0)} h_2 h_2^\dagger (A_\theta^{(0)})^\dagger.
	\end{align}
	The last term $A_\theta^{(0)} h_2 h_2^\dagger (A_\theta^{(0)})^\dagger$ is positive semidefinite. If we denote by $\Omega_\theta(h)$ the performance operator that correponds to the ensemble decomposition $\tilde{A}_\theta$ and by $\Omega_\theta(h_r) := 4 (\dot{A}_{\theta}^{(0)} - \im A_{\theta}^{(0)} h_r) (\dot{A}_{\theta}^{(0)} - \im A_{\theta}^{(0)} h_r)^\dagger$ the performance operator of an ensemble decomposition of size $r$, we note that
	\begin{align}
	\Omega_\theta(h) \succeq \Omega_\theta(h_r).
	\end{align}
	Since Eq.~\eqref{eq:qfi_expr1} is linear in $\Omega_\theta(h)$, it follows that the minimum over equivalent decompositions is achieved for a decomposition of size $r$.
	\end{proof}

	In order to simplify the problem, we are going to exchange the minimisation and maximisation. We are allowed to do so since the function we are interested in is convex in $h$ and concave in $T$ \cite{Rockafellar1970}:
	\begin{equation}
		J(C_{\theta}) = \min_{h} \max_{T} \tr[ T \, (\mathds{1}_{\text{aux}} \otimes \tr_{2N}[\Omega_\theta])].
		\end{equation}

 We first fix $h$ and maximise over all probes. The problem can be stated as
	\begin{equation} \label{eq:primal}
		\begin{aligned}  
		& \text{maximise} & & \tr \left[ T \,   (\mathds{1}_{\text{aux}} \otimes \tr_{2N}[\Omega_\theta(h)]) \right], \\	
		& & \tr_{2N-1, \text{aux}}\left[ T \right] &= \mathds{1}_{2N-2} \otimes T^{(N-1)}, \\
		&  & \tr_{2k-1}\left[ T^{(k)} \right] &= \mathds{1}_{2k-2} \otimes T^{(k-1)}, \quad k=2,\ldots, N-1, \\
		&   &\tr_{1}\left[ T^{(1)} \right] &= 1, 
		\end{aligned}
		\end{equation}	 
	where $T$, $T^{(k)}$ for $k=1,\ldots, N-1$  are positive semidefinite and the equality constraints arise from the definition of the comb [Eq.~(1) of the main text]. 
Since the objective function and equality constraints are linear in $T$, this is a semidefinite program, which has efficient numerical methods. In addition, we can convert the comb QFI into a minimization problem using (strong) duality.
	
	\subsection{Dual problem for $N>1$}
	\label{sec:dualproblemN>1}
	In this section, we are going to derive the dual problem of the primal problem Eq.~\eqref{eq:primal} for $N> 1$. 
	
	\begin{lemma} The dual problem of the primal problem Eq.~\eqref{eq:primal} for $N>1$ is given as
		\begin{equation} \label{eq:dual_appendix}
	\begin{aligned}
	& \text{minimize} & & S^{(0)}, \\
	& \text{subject to} &  \mathds{1}_{1} \, S^{(0)} &\succeq \tr_{2}\left[ S^{(1)} \right], \\
	& & \mathds{1}_{2k-1} \otimes S^{(k-1)} &\succeq \tr_{2k}\left[ S^{(k)} \right], \quad k=2,\ldots, N-1 \\
	& & \mathds{1}_{2N-1} \otimes S^{(N-1)} &\succeq \tr_{2N}[\Omega_\theta(h)].
	\end{aligned}
	\end{equation}
	In addition, strong duality holds, that is the solution of the dual problem coincides with the solution of the primal problem.	
	\end{lemma}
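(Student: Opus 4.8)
The plan is to derive Eq.~\eqref{eq:dual_appendix} as the Lagrangian dual of the semidefinite program \eqref{eq:primal} and then establish strong duality via Slater's condition. First I would introduce Lagrange multipliers for each of the equality constraints in \eqref{eq:primal}: a Hermitian operator $S^{(N-1)}\in\mathcal L(\bigotimes_{i=1}^{2N-2}\h i)$ dual to the constraint $\tr_{2N-1,\mathrm{aux}}[T]=\mathds 1_{2N-2}\otimes T^{(N-1)}$, Hermitian operators $S^{(k-1)}$ ($k=2,\dots,N-1$) dual to $\tr_{2k-1}[T^{(k)}]=\mathds 1_{2k-2}\otimes T^{(k-1)}$, and a scalar $S^{(0)}\in\mathbb R$ dual to $\tr_1[T^{(1)}]=1$. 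The primal variables $T,T^{(1)},\dots,T^{(N-1)}$ are all constrained to be positive semidefinite, so the Lagrangian is maximized over the PSD cone; this produces, for each primal variable, a dual PSD (operator-inequality) constraint obtained by collecting the terms in the Lagrangian that multiply that variable. The key bookkeeping step is to track how each $T^{(k)}$ appears in two places: once in its own normalization constraint (coupled to $S^{(k)}$ via a trace over $\h{2k}$, after suitably using the adjoint of the partial trace, which tensors with $\mathds 1_{2k}$) and once on the right-hand side of the next constraint up (coupled to $S^{(k-1)}$ with an identity on $\h{2k-1}$). Requiring the coefficient of $T^{(k)}\succeq0$ in the Lagrangian to be $\preceq0$ then yields exactly $\mathds 1_{2k-1}\otimes S^{(k-1)}\succeq\tr_{2k}[S^{(k)}]$, and similarly the coefficient of $T\succeq0$ gives $\mathds 1_{2N-1}\otimes S^{(N-1)}\succeq\tr_{2N}[\Omega_\theta(h)]$ since $T$ is coupled to the objective operator $\mathds 1_{\mathrm{aux}}\otimes\tr_{2N}[\Omega_\theta(h)]$ and to $S^{(N-1)}$. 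The remaining term in the Lagrangian that is not multiplied by any primal variable is the scalar $S^{(0)}$ coming from the constraint $\tr_1[T^{(1)}]=1$, which becomes the objective of the dual; the coefficient of $T^{(1)}$ yields $\mathds 1_1 S^{(0)}\succeq\tr_2[S^{(1)}]$.

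For strong duality I would invoke the standard SDP duality theorem (e.g.\ along the lines used in the quantum-comb literature \cite{Chiribella2009,Watrous2018}): since the primal \eqref{eq:primal} is a maximization of a linear functional over an affine slice of the PSD cone and the dual \eqref{eq:dual_appendix} is feasible, it suffices to exhibit a strictly feasible point on one side. The natural choice is a strictly feasible primal point: take $T$ and each $T^{(k)}$ to be (scaled) maximally mixed operators on their respective spaces, i.e.\ $T^{(k)}=\bigotimes_{i=1}^{2k}\mathds 1_i/\prod_{j\ \mathrm{odd},\,j\le 2k} d_j$ and analogously for $T$, which are strictly positive definite and satisfy all the comb normalization constraints in \eqref{eq:primal}; this is essentially the Choi operator of the comb that discards everything and prepares the maximally mixed state. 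Slater's condition then gives strong duality and attainment of the dual optimum, so the primal and dual values coincide.

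The main obstacle I anticipate is purely notational rather than conceptual: getting the adjoints of the partial-trace maps right, so that the operator inequalities come out tensored with identities on the correct Hilbert spaces and in the correct causal order. In particular one must be careful that the ancilla space $\h{\mathrm{aux}}$ is traced out together with $\h{2N-1}$ in the top-level constraint, so that $S^{(N-1)}$ lives only on $\bigotimes_{i=1}^{2N-2}\h i$ and the objective operator $\tr_{2N}[\Omega_\theta(h)]$ (which already has $\h{\mathrm{aux}}$ absent after the link product has been unfolded via Eq.~\eqref{eq:link_vil}) matches dimensions with $\mathds 1_{2N-1}\otimes S^{(N-1)}$. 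A secondary subtlety is confirming that the map $T\mapsto\tr[T(\mathds 1_{\mathrm{aux}}\otimes\tr_{2N}[\Omega_\theta(h)])]$ indeed equals $\tr[(C_\theta*T)\,\Omega'_\theta(h)]$-type pairing so that the objective is the one we want; this has already been handled in the lemma producing Eq.~\eqref{eq:qfi_link}, so I would simply cite it. Once the multiplier structure is laid out, the derivation is a routine Lagrangian computation, and the separate treatment of the $N=1$ base case (where the whole chain of $S^{(k)}$ collapses to the single scalar constraint $\mathds 1_1 S^{(0)}\succeq\tr_2[\Omega_\theta(h)]$) is deferred to Subsection~\ref{sec:dual1}.
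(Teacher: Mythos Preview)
Your approach is correct and essentially the same as the paper's: both derive the dual by computing the adjoint of the linear constraint map (the paper packages this via the block-diagonal standard form of \cite{Chiribella2012,Watrous2018}, while you unpack the Lagrangian term by term), arriving at the same operator inequalities. The only notable difference is the Slater witness: the paper exhibits a strictly feasible \emph{dual} point (taking $S^{(N-1)}=2\|\tr_{2N}[\Omega_\theta(h)]\|_\infty\,\mathds 1$ and building the $S^{(k)}$ downward) together with boundedness of the primal from \cite[Theorem~1]{Yang2019}, whereas you exhibit a strictly feasible \emph{primal} point (the maximally mixed comb) together with dual feasibility---both are valid routes to strong duality.
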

	
	\begin{proof}
	 We follow the approach of Ref.~\cite{Chiribella2012} and cast the primal problem  in the standard form:
	 \begin{equation} \label{eq:primal_standard}
	\begin{aligned} 
	& \text{maximize} \quad & \tr \left[\mathfrak{T} \mathfrak{C} \right] & \\
	& \text{subject to} \quad &\mathrm{L}(\mathfrak{T}) = \mathfrak{O} & \\
	&	\quad & \mathfrak{T} \succeq 0 & \end{aligned}
	\end{equation}
	where we compose the following quantities:
	\begin{equation}
	\begin{aligned}
	& \mathfrak{T} =  \left(\bigoplus_{k=1}^{N-1} T^{(k)}\right) \oplus T, \quad \mathrm{L}(\mathfrak{T}) = \bigoplus_{k=1}^{N} L^{(k)},\\
	& \mathfrak{C} = \left(\bigoplus_{k=1}^{N-1} 0^{(k)}\right) \oplus \left( \mathds{1}_{\text{aux}} \otimes  \tr_{2N}[\Omega_\theta(h)] \right), \\
	& \mathfrak{O} =  1 \oplus \left(\bigoplus_{k=2}^{N} 0^{(k)} \right).
	\end{aligned}
	\end{equation}
	
	 Here, $0^{(k)}$ denotes the zero matrix on an appropriately chosen space. By composing these block matrices, we can transform the constraints on the probe in one single constraint.
	 Furthermore, we decompose
	 	\begin{equation} \label{eq:l}
		\begin{aligned}
		& L^{(1)} = \tr_{1}\left[ T^{(1)} \right] \\
		& L^{(k)} = \tr_{2k-1}\left[ T^{(k)} \right] - \mathds{1}_{2k-2} \otimes T^{(k-1)}, \quad \text{for } k=2,\ldots, N-1, \\
		& L^{(N)} = \tr_{2N-1,\text{aux}}\left[ T \right] - \mathds{1}_{2N-2} \otimes T^{(N-1)}.
		\end{aligned}
		\end{equation}
		
	The dual problem of the problem Eq.~\eqref{eq:primal_standard} is given as \cite{Watrous2018} 
	\begin{equation} \label{eq:dual_standard} 
	\begin{aligned}
	& \text{minimize} \quad & \tr \left[ \mathfrak{S} \mathfrak{O} \right] \\
	& \text{subject to} \quad & \mathrm{L}^\dagger(\mathfrak{S}) \succeq \mathfrak{C}, \\
	& & \mathfrak{S} \, \text{Hermitian},
	\end{aligned}
	\end{equation}
	where we decompose
	\begin{align}
	 \mathfrak{S} = \bigoplus_{k=0}^{N-1} S^{(k)}, \quad \mathrm{L}^\dagger(\mathfrak{S}) = \bigoplus_{k=1}^{N} (L^\dagger)^{(k)},
	\end{align}
	and $\mathrm{L}^\dagger$ is the dual map with respect to the Hilbert-Schmidt product: It satisfies $\tr\left[ \mathfrak{S} \mathrm{L}(\mathfrak{T})\right] = \tr\left[ \mathrm{L}^\dagger(\mathfrak{S}) \mathfrak{T} \right]$ for all $\mathfrak{S}, \mathfrak{T}$. 
	Using this definition, we find for $N>1$
		\begin{equation} \label{eq:ldagger}
		\begin{aligned}
		& & (L^\dagger)^{(1)} &=  \mathds{1}_{1} S^{(0)} - \tr_2\left[ S^{(1)} \right], \\
		& & (L^\dagger)^{(k)}  &= \mathds{1}_{2k-1} \otimes S^{(k-1)} - \tr_{2k}\left[ S^{(k)}\right] & \quad \text{for } k=2, \ldots, N-1, \\
		& & (L^\dagger)^{(N)}  &= \mathds{1}_{2N-1,\text{aux}} \otimes S^{(N-1)}.
		\end{aligned}
		\end{equation}

	Inserting these decompositions into Eq.~\eqref{eq:dual_standard}, we obtain
	\begin{equation}
	\begin{aligned}
	& \text{minimize} & & S^{(0)}, \\
	& \text{subject to} &  \mathds{1}_{1} \, S^{(0)} &\succeq \tr_{2}\left[ S^{(1)} \right], \\
	& & \mathds{1}_{2k-1} \otimes S^{(k-1)} &\succeq \tr_{2k}\left[ S^{(k)} \right], \quad k=2,\ldots, N-1 \\
	& & \mathds{1}_{2N-1,\text{aux}} \otimes S^{(N-1)} &\succeq \mathds{1}_{\text{aux}} \otimes  \tr_{2N}[\Omega_\theta(h)].
	\end{aligned}
	\end{equation}
		The last constraint is equivalent to $\mathds{1}_{2N-1} \otimes S^{(N-1)} \succeq  \tr_{2N}[\Omega_\theta(h)]$.
				
	Note that $S^{(N-1)} \succeq 0$ since $\Omega_\theta(h) \succeq 0$. This implies recursively that $S^{(k)} \succeq 0$ for all $k=0,1,\ldots,N-1$. In addition, there exists an $\mathfrak{S}$ with $\mathrm{L}^\dagger(\mathfrak{S}) \succ \mathfrak{C}$. For example, we can set recursively
	\begin{equation} \begin{aligned}
	& & S^{(N-1)} &= 2 \norm{\tr_{2N}[\Omega_\theta(h)]}_{\infty} \, \mathds{1}_{1,2,\dots, 2N-2}, \\
	& & S^{(k)} &= 2 \tr_{2k+1,2k+2} \left[ S^{(k+1)} \right], \quad k=0, \ldots, N-2. 
	\end{aligned} \end{equation}
	This implies, together with the fact that the QFI is upper-bounded \cite[Theorem 1]{Yang2019}, that the assumptions of Slater's theorem on strong duality \cite{Gutoski2012, Molina2012} are satisfied.
	\end{proof}	
	
	 The following lemma shows that we can turn the inequality constraints into equality constraints.
	
	\begin{lemma} For every $\mathfrak{S}$ that satisfies the bounds in the dual problem Eq.~\eqref{eq:dual_appendix}, we can construct a $\tilde{\mathfrak{S}} = \bigoplus_{k=0}^{N-1} \tilde{S}^{(k)}$ such that the first $(N-1)$ inequalities become equalities and the dual form reads
	\begin{equation} \begin{aligned}
	& \text{minimize} & & \tilde{S}^{(0)}, \\
	& \text{subject to}  & \mathds{1}_{1} \, \tilde{S}^{(0)} &= \tr_{2}\left[ \tilde{S}^{(1)} \right], \\
	& & \mathds{1}_{2k-1} \otimes \tilde{S}^{(k-1)} &= \tr_{2k}\left[  \tilde{S}^{(k)} \right], \quad k=2,\ldots, N-1 \\
	& & \mathds{1}_{2N-1} \otimes \tilde{S}^{(N-1)} & \succeq  \tr_{2N}[\Omega_\theta(h)]
	\end{aligned} \end{equation}	
	\end{lemma}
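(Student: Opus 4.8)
The plan is to prove this by a ``slack‑absorption'' argument: starting from any feasible $\mathfrak{S}$ for the inequality‑constrained dual Eq.~\eqref{eq:dual_appendix}, I would construct $\tilde{\mathfrak{S}}$ by working \emph{downward} from $k=1$ to $k=N-1$, at each stage inflating $\tilde S^{(k)}$ just enough to turn the $k$th inequality into an equality, while checking that the remaining inequalities are preserved (or become only looser). The key observation is that the first $(N-1)$ constraints have the structured form $\mathds{1}_{2k-1}\otimes \tilde S^{(k-1)} = \tr_{2k}[\tilde S^{(k)}]$, i.e.\ each $\tilde S^{(k-1)}$ is \emph{determined} by $\tilde S^{(k)}$ via a partial trace. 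So the natural move is to set $\tilde S^{(k)} := S^{(k)}$ for the top level $k=N-1$ (leaving the last, genuinely inequality constraint $\mathds{1}_{2N-1}\otimes \tilde S^{(N-1)}\succeq \tr_{2N}[\Omega_\theta(h)]$ untouched), and then \emph{define} $\tilde S^{(k-1)}$ recursively by $\tilde S^{(k-1)} := \frac{1}{d_{2k-1}}\,\tr_{2k-1}\tr_{2k}[\tilde S^{(k)}]$ for $k=N-1,\dots,2$, which is precisely the operator that makes $\mathds{1}_{2k-1}\otimes \tilde S^{(k-1)} = \tr_{2k}[\tilde S^{(k)}]$ hold (using that the left side, when we trace out $2k-1$, must reproduce $d_{2k-1}\tilde S^{(k-1)}$).

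Wait --- that recursion as stated forces equality \emph{by fiat} but may break the next inequality down the chain, so the ordering and the precise inflation need care. The cleaner route, which I would actually carry out, is the following. Process the constraints \textbf{top‑down}. We already have $S^{(k)}\succeq 0$ for all $k$ (shown in the previous lemma). Keep $\tilde S^{(N-1)} := S^{(N-1)}$. For $k = N-1, N-2, \dots, 2$, observe that feasibility gives $\mathds{1}_{2k-1}\otimes S^{(k-1)} \succeq \tr_{2k}[\tilde S^{(k)}]$; I then \emph{replace} $S^{(k-1)}$ by the smallest operator $\tilde S^{(k-1)}$ for which $\mathds{1}_{2k-1}\otimes \tilde S^{(k-1)} = \tr_{2k}[\tilde S^{(k)}]$ holds. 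Such a $\tilde S^{(k-1)}$ exists precisely because $\tr_{2k}[\tilde S^{(k)}]$ already has the product form $\mathds{1}_{2k-1}\otimes(\,\cdot\,)$ --- this is the content of the original comb normalization conditions, which are preserved along the way --- and then $\tilde S^{(k-1)} = \tr_{2k-1}[\tr_{2k}[\tilde S^{(k)}]]/d_{2k-1}$. The crucial monotonicity fact is $\tilde S^{(k-1)} \preceq S^{(k-1)}$: tracing the feasibility inequality $\mathds{1}_{2k-1}\otimes S^{(k-1)} \succeq \tr_{2k}[\tilde S^{(k)}]$ over space $2k-1$ and dividing by $d_{2k-1}$ yields exactly $S^{(k-1)}\succeq \tilde S^{(k-1)}$. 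Since $\tilde S^{(k-1)}$ only \emph{shrank} relative to $S^{(k-1)}$, the \emph{next} inequality in the chain, $\mathds{1}_{2k-3}\otimes S^{(k-2)} \succeq \tr_{2k-2}[S^{(k-1)}] \succeq \tr_{2k-2}[\tilde S^{(k-1)}]$, still holds, so the induction goes through. At the bottom, $\tilde S^{(0)} \preceq S^{(0)}$, so the objective does not increase; conversely any $\tilde{\mathfrak S}$ of the equality‑constrained form is feasible for Eq.~\eqref{eq:dual_appendix}, so the two optimal values coincide.

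A subtlety worth flagging: after replacing $\tilde S^{(k)}$ one must re‑examine whether $\mathds{1}_{2k-1}\otimes \tilde S^{(k-1)} \succeq \tr_{2k}[\tilde S^{(k)}]$ --- but here it is an \emph{equality} by construction, so there is nothing to check; the only inequality that survives is the last one, $\mathds{1}_{2N-1}\otimes \tilde S^{(N-1)}\succeq \tr_{2N}[\Omega_\theta(h)]$, which is untouched since $\tilde S^{(N-1)}=S^{(N-1)}$. I would also note explicitly that each $\tilde S^{(k)}$ remains positive semidefinite (partial traces of PSD operators are PSD, and dividing by $d_{2k-1}>0$ preserves this), so the constructed $\tilde{\mathfrak S}$ is a bona fide dual point.

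The main obstacle is purely bookkeeping rather than conceptual: one must get the direction of the recursion right (top‑down, from tooth $N-1$ toward tooth $0$) so that each replacement only \emph{loosens} the as‑yet‑unprocessed inequalities, and one must verify that $\tr_{2k}[\tilde S^{(k)}]$ genuinely has the tensor‑product structure $\mathds{1}_{2k-1}\otimes(\cdot)$ that makes the ``smallest equality‑achieving $\tilde S^{(k-1)}$'' well defined --- this last point is where the comb causality constraints on the $S^{(k)}$'s are actually used, and it is the step I would write out most carefully.
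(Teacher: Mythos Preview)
Your top-down construction has a real gap. You propose to fix $\tilde S^{(N-1)}=S^{(N-1)}$ and then \emph{define} $\tilde S^{(k-1)}$ so that $\mathds{1}_{2k-1}\otimes\tilde S^{(k-1)}=\tr_{2k}[\tilde S^{(k)}]$. But such a $\tilde S^{(k-1)}$ exists only if $\tr_{2k}[\tilde S^{(k)}]$ already has the form $\mathds{1}_{2k-1}\otimes(\cdot)$, and nothing in the inequality-constrained dual Eq.~\eqref{eq:dual_appendix} forces that: the $S^{(k)}$ there are arbitrary Hermitian operators satisfying $\mathds{1}_{2k-1}\otimes S^{(k-1)}\succeq\tr_{2k}[S^{(k)}]$, which does \emph{not} imply that $\tr_{2k}[S^{(k)}]$ is a tensor product with $\mathds{1}_{2k-1}$. (Already for $N=2$: $\tr_2[S^{(1)}]$ need not be a multiple of $\mathds{1}_1$, so no scalar $\tilde S^{(0)}$ can produce equality.) Your final paragraph invokes ``the comb causality constraints on the $S^{(k)}$'s'', but those are precisely what you are trying to \emph{establish}; they are not available as hypotheses.

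The paper resolves this by going in the opposite direction: keep the objective $\tilde S^{(0)}=S^{(0)}$ fixed and work \emph{bottom-up}, at each stage inflating $S^{(k)}$ (not shrinking $S^{(k-1)}$). Concretely, with $\delta^{(k)}:=\mathds{1}_{2k-1}\otimes\tilde S^{(k-1)}-\tr_{2k}[S^{(k)}]\succeq 0$ one sets $\tilde S^{(k)}:=S^{(k)}+\delta^{(k)}\otimes\rho_{2k}$ for any state $\rho_{2k}$; then $\tr_{2k}[\tilde S^{(k)}]=\mathds{1}_{2k-1}\otimes\tilde S^{(k-1)}$ by construction, and since $\tilde S^{(k)}\succeq S^{(k)}$ the $(k{+}1)$st inequality (and ultimately the last one involving $\Omega_\theta(h)$) only loosens. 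The point is that modifying $S^{(k)}$ lets you \emph{force} the required tensor-product structure on $\tr_{2k}[\tilde S^{(k)}]$, whereas modifying $S^{(k-1)}$ alone cannot.
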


	\begin{proof} We prove the lemma by induction. Assume that $\mathfrak{S}$ satisfies the constraints Eq.~\eqref{eq:dual_appendix}. Set 
	\begin{equation}  \label{eq:transformIntoEquality}
		\begin{aligned} 
		& & \tilde{S}^{(0)} &= S^{(0)}, \\
		& & \delta^{(1)} &= \mathds{1}_{1} \tilde{S}^{(0)} - \tr_2\left[S^{(1)}\right] \succeq 0, \\
		& & \tilde{S}^{(1)} &= S^{(1)} + \delta^{(1)} \otimes \rho_2, \\
		& & \tilde{S}^{(j)} &= S^{(j)}, \quad j=2,\ldots,N-1, \end{aligned} \end{equation}
	where $\rho_2$ is an arbitrary quantum state in $\mathcal{S}(\h{2})$. It follows that $\tr_2[ \tilde{S}^{(1)}] = \mathds{1}_{1}\tilde{S}^{(0)}$, hence $\tilde{\mathfrak{S}}$ achieves equality in the first constraint. Moreover, since $\delta^{(1)} \succeq 0$, we have $\mathds{1}_3 \otimes \tilde{S}^{(1)} \succeq \mathds{1}_3 \otimes S^{(1)}$ and since $S^{(1)}$ satisfies the second constraint, it holds that $\mathds{1}_3 \otimes S^{(1)} \succeq \tr_4[ S^{(2)}] = \tr_4[ \tilde{S}^{(2)}]$. Hence, $\mathds{1}_3 \otimes \tilde{S}^{(1)} \succeq \tr_4 [ \tilde{S}^{(2)} ]$ and $\tilde{S}^{(1)}$ satisfies the second constraint. In total, $\tilde{\mathfrak{S}}$ has the same objective value as $\mathfrak{S}$, satisfies all constraints and achieves equality in the first constraint.
	
	Now, assume that $\mathfrak{S}$ achieves equality in the first $1 \leq k \leq N-2$ constraints and define
		\begin{equation} \begin{aligned}
		& & \tilde{S}^{(j)} &= S^{(j)}, \quad j=1,\ldots, k, \\
		& & \delta^{(k+1)} &= \mathds{1}_{2k+1} \otimes \tilde{S}^{(k)} - \tr_{2k+2}\left[ S^{(k+1)} \right] \succeq 0, \\
		& &	\tilde{S}^{(k+1)} &= S^{(k+1)} + \delta^{(k+1)} \otimes \rho_{2k}, \\
		& &	\tilde{S}^{(j)} &= S^{(j)}, \quad j=k+2, \ldots, N-1, \end{aligned} \end{equation}
	where $\rho_{2k}$ is an arbitrary quantum state on $\mathcal{S}(\h{2k})$. Similar to before, $\tilde{\mathfrak{S}}$ achieves the same objective value as $\mathfrak{S}$, satisfies all constraints and achieves equality in the first $(k+1)$ constraints. By induction, it follows that for every $\mathfrak{S}$ satisfying the constraints, there exists $\tilde{\mathfrak{S}}$ achieving the same objective value and satisfying the first $(N-1)$ constraints with equality.
	\end{proof}
	
	In summary, we have found the dual problem
	\begin{equation} \begin{aligned}
	& \text{minimize} & & S^{(0)}, \\
	& \text{subject to}  & \mathds{1}_{1} \, S^{(0)} &= \tr_{2}\left[ S^{(1)} \right], \\
	& & \mathds{1}_{2k-1} \otimes S^{(k-1)} &= \tr_{2k}\left[ S^{(k)} \right], \quad k=2,\ldots, N-1, \\
	& & \mathds{1}_{2N-1} \otimes S^{(N-1)} & \succeq  \tr_{2N}[\Omega_\theta(h)]
	\end{aligned} \end{equation}
	and shown that its solution is equal to the QFI of the comb $C_\theta$. Note that the first two equality constraints imply that $ S^{(N-1)} / S^{(0)}$ is a quantum comb in $\text{Comb}[(\h{1},\h{2}), \ldots, (\h{2N-3},\h{2N-2})]$.
	Thus, by setting $\lambda = S^{(0)}$ and $S_{\text{nor}}^{(k)} = \frac{S^{(k)}}{\lambda}$ for $k=1,\ldots,N-1$, the dual problem corresponds to:
	\begin{equation} \label{eq:dual_normalised} \begin{aligned} 
	& \text{minimise} & & \lambda \\
	& \text{subject to} & \mathds{1}_{1} &= \tr_{2}\left[ S_{\text{nor}}^{(1)} \right], \\
	& & \mathds{1}_{2k-1} \otimes S_{\text{nor}}^{(k-1)} &= \tr_{2k}\left[ S_{\text{nor}}^{(k)} \right], \quad k=2, \ldots, N-1, \\
	& & \mathds{1}_{2N-1} \otimes \lambda S_{\text{nor}}^{(N-1)} & \succeq  \tr_{2N}[\Omega_\theta(h)].
	\end{aligned} \end{equation}
	By redefining $S^{(k)} := S^{(k)}_{\text{nor}}$ we recover the problem in Algorithm~1 without minimising over equivalent decompositions.

	 \subsection{Dual problem for $N=1$} 
	\label{sec:dual1}	 
	 In this subsection, we derive the dual problem for the case $N=1$, i.e., when the comb is a quantum channel. 
	
	\begin{lemma} The comb QFI for $N=1$ is given by
		\begin{equation}
		J(C_\theta) = \min_{h} \norm{ \tr_2[ \Omega_\theta(h) ]}_{\infty}.
		\end{equation}
	This agrees with a result of Ref.~\cite[Theorem 4]{Fujiwara2008}.
	\end{lemma}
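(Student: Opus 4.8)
The plan is to specialise the primal problem of Subsection~\ref{sec:primal} to $N=1$, where the comb structure trivialises. First I would observe that for $N=1$ a probe has no intermediate teeth: the constraint chain in Eq.~\eqref{eq:primal} collapses to the single normalisation $\tr_{1,\text{aux}}[T]=1$, so $T$ ranges over all density operators on $\h{1}\otimes\h{\text{aux}}$. Combined with the convex--concave minimax exchange already justified above (the objective being convex in $h$ and linear in $T$~\cite{Rockafellar1970}), this yields
\begin{equation}
J(C_\theta) \;=\; \min_{h}\;\max_{T\succeq 0,\ \tr[T]=1}\tr\!\left[ T\,\bigl(\mathds{1}_{\text{aux}}\otimes\tr_2[\Omega_\theta(h)]\bigr)\right].
\end{equation}

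Next I would evaluate the inner maximisation. For fixed $h$ the operator $\mathds{1}_{\text{aux}}\otimes\tr_2[\Omega_\theta(h)]$ is positive semidefinite, and maximising $\tr[TX]$ over density operators $T$ returns the largest eigenvalue of $X$, i.e.\ $\norm{X}_\infty$; since tensoring with $\mathds{1}_{\text{aux}}$ only multiplies spectral multiplicities, $\norm{\mathds{1}_{\text{aux}}\otimes\tr_2[\Omega_\theta(h)]}_\infty=\norm{\tr_2[\Omega_\theta(h)]}_\infty$, so the ancilla is of no help in this case and (by the size-reduction lemma above) $h$ may be taken $r\times r$. Reinstating the outer minimisation gives $J(C_\theta)=\min_{h}\norm{\tr_2[\Omega_\theta(h)]}_\infty$, the asserted formula. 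As a cross-check I would note that formally setting $N=1$ in the dual Eq.~\eqref{eq:dual_normalised} deletes every comb constraint on the $S^{(k)}$ and leaves only $\lambda\mathds{1}_1\succeq\tr_2[\Omega_\theta(h)]$, whose optimum is exactly $\lambda=\norm{\tr_2[\Omega_\theta(h)]}_\infty$.

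To establish the agreement with Ref.~\cite[Theorem 4]{Fujiwara2008}, I would translate the performance operator into Kraus language: for $N=1$ the channel Choi operator $C_\theta=\sum_i\lket{C_{\theta,i}}\lbra{C_{\theta,i}}$ corresponds to a gauge-dependent Kraus family $\{K_{\theta,i}\}$ via $\lket{C_{\theta,i}}=(K_{\theta,i}\otimes\mathds{1})\sum_n\lket{n}\lket{n}$, the Hermitian $h$ capturing the nonuniqueness of the decomposition is precisely the gauge freedom of the Kraus representation, and unfolding the partial transpose $T_1$ turns $\tr_2[\Omega_\theta(h)]$ into $4\sum_i\dot{\tilde K}_{\theta,i}^\dagger\dot{\tilde K}_{\theta,i}$ on the input space $\h{1}$, so that $J(C_\theta)=4\min_h\bigl\|\sum_i\dot{\tilde K}_{\theta,i}^\dagger\dot{\tilde K}_{\theta,i}\bigr\|_\infty$, which is the Fujiwara--Imai expression. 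I expect the only real obstacle to be bookkeeping: tracking the index placement of $T_1$ and of the complex conjugations so that $\tr_2[\Omega_\theta(h)]$ lands on $\h{1}$ rather than $\h{2}$, and confirming that minimising over $h\in\mathrm{Herm}(\mathbb{C}^r)$ already exhausts the Kraus gauge---both points being covered by the lemmata already proved.
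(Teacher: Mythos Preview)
Your proposal is correct but takes a more direct route than the paper. The paper derives the $N=1$ case by the same duality machinery it uses for $N>1$: it casts the primal into standard SDP form, computes the adjoint map $\mathrm{L}^\dagger$ explicitly, and reads off the dual problem $\min\lambda$ subject to $\lambda\,\mathds{1}_{1}\succeq\tr_2[\Omega_\theta(h)]$, whose optimum is $\norm{\tr_2[\Omega_\theta(h)]}_\infty$. You instead evaluate the \emph{primal} directly, exploiting that for $N=1$ the feasible set is simply the set of density operators on $\h{1}\otimes\h{\text{aux}}$ and that $\max_{T}\tr[TX]=\norm{X}_\infty$ for positive semidefinite $X$---no Lagrangian duality is needed. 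Your argument is more elementary and self-contained for this special case; the paper's argument has the virtue of uniformity with the $N>1$ derivation and makes the connection to the general dual Eq.~\eqref{eq:dual_normalised} a consequence rather than a cross-check. The Kraus-operator translation at the end is essentially identical in both proofs, up to a transpose convention in the Choi isomorphism (the paper writes $\lket{C_{\theta,i}}=(\hat{C}_i^{T}\otimes\mathds{1})\sum_k\lket{k}\lket{k}$, you write $(K_{\theta,i}\otimes\mathds{1})\sum_n\lket{n}\lket{n}$), which is precisely the bookkeeping you flag.
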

	
	\begin{proof}	 
	 First we start with the primal problem Eq.~\eqref{eq:primal} and cast it into the standard form Eq.~\eqref{eq:primal_standard} 
	\begin{equation} \begin{aligned}
	& \text{maximize} \quad & \tr \left[\mathfrak{T} \mathfrak{C} \right], & \\
	& \text{subject to} \quad &\mathrm{L}(\mathfrak{T}) = \mathfrak{O}, & \\
	&	\quad & \mathfrak{T} \succeq 0, & \end{aligned} \end{equation}
	where
	 	\begin{equation} \mathfrak{T} = T, \quad \mathfrak{C} = \mathds{1}_{\text{aux}} \otimes 
	 	 \tr_{2}[\Omega_\theta(h)], \quad
	 	\mathrm{L}(\mathfrak{T}) = \tr_{\text{aux},1}\left[ T \right], \quad
	 	\mathfrak{O} = 1. \end{equation}
	These definitions coincide with the ones made for $N > 1$. The dual problem takes the form 
	\begin{equation} \begin{aligned}
	& \text{minimize} \quad & \tr \left[ \mathfrak{S} \mathfrak{O}, \right] \\
	& \text{subject to} \quad & \mathrm{L}^\dagger(\mathfrak{S}) \succeq \mathfrak{C},
	\end{aligned} \end{equation}
	where we compose $\mathfrak{S} = S^{(0)},
	 \mathrm{L}^\dagger(\mathfrak{S}) = (L^\dagger)^{(1)}$.
	Now 
	\begin{equation} 
	\begin{aligned}
	& & \tr\left[ \mathrm{L}(\mathfrak{T}) \mathfrak{S} \right] &= \tr_{\text{aux},1}\left[ T \right] S^{(0)}, \\
	& & \tr\left[ \mathrm{L}^\dagger(\mathfrak{S}) \mathfrak{T} \right] &= \tr_{\text{aux},1} \left[ (L^\dagger)^{(1)} T \right].
	\end{aligned}
	\end{equation}
	 We conclude that $(L^\dagger)^{(1)} = S^{(0)} \mathds{1}_{\text{aux},1}$ and thus by setting $\lambda := S^{(0)}$ the dual problem takes the form
	\begin{equation} \begin{aligned}
	& \text{minimize} & & \lambda, \\
	& \text{subject to} & \lambda \, \mathds{1}_{\text{aux},1} &\succeq \mathds{1}_{\text{aux}} \otimes  \tr_{2}[\Omega_\theta(h)].
	\end{aligned} \end{equation}
	We note that the last constraint is equivalent to $\lambda \, \mathds{1}_{1} \succeq  \tr_{2}[\Omega_\theta(h)]$ and the minimal $\lambda$ is equal to $\norm{ 	\tr_{2}[\Omega_\theta(h)] }_{\infty}$. Taking into account the minimization over Hermitian matrices, we find
	\begin{equation}
	J(C_\theta) = \min_h \norm{ 	\tr_{2}[\Omega_\theta(h)] }_{\infty}.
	\end{equation}
	 	
	Using the relation between ensemble decomposition $\lket{C_{\theta,i}}$ and Kraus operators $\hat{C}_{\theta,i}$  	
	 	\begin{align} \label{eq:kraus} \ket{C_{\theta,i}} =  \sum_{k_2=1}^{\dim \h{2}}(\hat{C}_i^T \otimes \mathds{1}_2)( \ket{k_2} \otimes \ket{k_2}), \end{align}
	 we can evaluate for a fixed $h=0$
	\begin{align} \tr_2\left[\Omega_\theta(h=0)\right] =  4 \tr_{2}\left[ \sum_{i} (\lket{\dot{C}_{\theta,i}} \lbra{ \dot{C}_{\theta,i}}\right]^{T} = 	
	4 \sum_{i} \dot{\hat{C}}_{\theta,i}^\dagger \dot{\hat{C}}_{\theta,i}. 	
	 \end{align}
	Taking into account the minimisation over equivalent Kraus operators we get the QFI of quantum channels, first obtained by Fujiwara and Imai in Ref.~\cite{Fujiwara2008}
		\begin{align} J(C_\theta) = 4 \min_{ \{ \hat{C}_{\theta,i} \} } \norm{\sum_{i} \dot{\hat{C}}_{\theta,i}^\dagger \dot{\hat{C}}_{\theta,i}}_{\infty}. \end{align}

	 \end{proof}
	 
	 \subsection{Equivalent decompositions}
	 \label{sec:equivalent_decomp}
	In this subsection, we show how to let the matrix $h$ enter linearly in the dual problem. In Eq.~\eqref{eq:dual_normalised}, we replace $\lambda$ by $4 \lambda$ and obtain the problem
		\begin{equation} \begin{aligned} \label{eq:dual_A}
	& \text{minimise} & & 4 \lambda \\
	& \text{subject to} &  \tr_{2}\left[ S^{(1)} \right] &= \mathds{1}_{1}, \\
	& & \tr_{2k}\left[ S^{(k)} \right] &= \mathds{1}_{2k-1} \otimes S^{(k-1)}, \quad k=2, \ldots, N-1, \\
	& & \mathds{1}_{2N-1} \otimes \lambda S^{(N-1)} &\succeq \frac{1}{4} \tr_{2N}[\Omega_\theta(h)].
	\end{aligned} \end{equation}	
	
	For a given decomposition $\{ \ket{C_{\theta,i}}\}$, we define $\lket{c_{i,m_{2N}}} = \sum_{m_1 \dots m_{2N-1}} \dot{\overline{\tilde{C}}}_{i,\theta}^{m_1 \dots m_{2N-1} m_{2N}} \lket{m_1 \dots m_{2N-1}}$ ($\overline{a}$ denotes the complex conjugate of $a$) and set up
    \begin{align} \label{eq:equivalentKrausOperatorsCondition}
	 		A(h,\lambda, S) = \left( \begin{array} {ccc|c}
	 		&  & &  \lbra{c_{1,1}} \\
	 		& \mathds{1}_{q\cdot d_{2N}} &  &  	\vdots \\
	 		 &  &  &   \lbra{c_{q,d_{2N}}} \\ \hline
	 		 \lket{c_{1,1}} & 
	 		 \dots & 
	 		 \lket{c_{q,d_{2N}}} & 
	 		\mathds{1}_{2N-1} \otimes \lambda S \end{array}\right),
 	\end{align}
	with $q$ the number of components in the ensemble decomposition and $d_{2N} = \dim(\h{2N})$.
	 By Schur's complement condition, positive semidefiniteness of $A$ is equivalent to

	 	\begin{equation} \mathds{1}_{2N-1} \otimes \lambda S \succeq  \sum_{i=1}^q \sum_{m=1}^{d_{2N}} \lket{c_{i,m}}\,\lbra{c_{i,m}} = \frac{1}{4} \tr_{2N}\left[\Omega_\theta(h)\right]. \end{equation}
	 	We can thus replace the last condition in Eq.~\eqref{eq:dual_A} and find 
	 	\begin{equation} \begin{aligned} 
	& \text{minimise} & & 4 \lambda \\
	& \text{subject to} &  \tr_{2}\left[ S^{(1)} \right] &= \mathds{1}_{1}, \\
	& & \tr_{2k}\left[ S^{(k)} \right] &= \mathds{1}_{2k-1} \otimes S^{(k-1)}, \quad k=2, \ldots, N-1, \\
	& & A(h,\lambda, S^{(N-1)}) & \succeq 0.
	\end{aligned} \end{equation}	
	 We find that this agrees with Algorithm~1.	Equation \eqref{eq:ntilde} implies that the variable $h$ enters linearly in Eq.~\eqref{eq:equivalentKrausOperatorsCondition}. Therefore the minimisation over equivalent decompositions is a semidefinite program.

	\subsection{Dual problem in terms of min-entropy}
	\label{sec:dual_minentropy}
	In this subsection, we show the derivation of Theorem 1 [Eq.~(8) in the main text]. 
	We start analogously to subsection \ref{sec:dualproblemN>1} with Eq.~\eqref{eq:primal}, but want to 
	\begin{equation}
		\begin{aligned}  
		& \text{maximise} & & \tr \left[ (\mathds{1}_{2N} \otimes T) \,   (\mathds{1}_{\text{aux}} \otimes \Omega_\theta(h)) \right], \\	
		& \text{subject to} & &  T \in \text{Comb}[ (\emptyset,\h{1}), \ldots, (\h{2N-2},\h{2N-1}\otimes \h{\text{aux}})], 
		\end{aligned}
		\end{equation}	
	that is we do not trace out over $\h{2N}$. 
	We proceed analogously to Eq.~\eqref{eq:l}, but adjust
	\begin{equation}
		L^{(N)} = \frac{1}{d_{2N}} \tr_{2N-1,2N,\text{aux}}[ \mathds{1}_{2N} \otimes T] - \mathds{1}_{2N-2} \otimes T^{(N-1)},
	\end{equation}
	with $d_{2N} = \dim(\h{2N})$. 
	This in turn changes the dual map $\mathsf{L}^\dagger$ (Eq.~\eqref{eq:ldagger}) to
	\begin{equation}
	(L^\dagger)^{(N)} = \frac{1}{d_{2N}} \mathds{1}_{2N-1,2N,\text{aux}} \otimes S^{(N-1)}.
	\end{equation}	
	The dual problem Eq.~\eqref{eq:dual_normalised} is only affected in the last constraint, and becomes
	
	\begin{equation} \begin{aligned} 
	& \text{minimise} & & \lambda \\
	& \text{subject to} & & S \in \text{Comb}[(\h{1},\h{2}), \ldots, (\h{2N-3},\h{2N-2})], \\
	& & & \frac{1}{d_{2N}} \, \mathds{1}_{2N-1,2N} \otimes \lambda S  \succeq \Omega_\theta(h).
	\end{aligned} \end{equation}	
	
	Comparing with the definition of the conditional min-entropy, we see that the minimal $\lambda$ satisfying the constraints is equal to
		\begin{equation}
		d_{2N} \, 2^{-H_{\text{min}}( N | [N-1])_{\Omega_\theta(h)}},
		\end{equation}
	which agrees with Theorem 1.

\section{Frequency estimation under different types of noise}
	\label{sec:collision}
	In this section, we provide more details about the collision model used for the numerical implementation. In Subsection~\ref{sec:scenarios}, we elaborate on the different scenarios (non-Markovian vs. Markovian, with vs. without feedback), in Subsection~\ref{sec:uint}, we consider different interaction unitaries, in Subsection~\ref{sec:results}, we give some numerical results.

	 \subsection{Scenarios}
	 \label{sec:scenarios}
	 In the following paragraph, we find a comb depending on how we control the system or whether we assume a Markovian or non-Markovian process. The different scenarios are depicted in Figure \ref{fig:scenarios} and we assume that we are given the single-step unitary $U = U_{\text{int}}(\tau) \, (\mathds{1}_E \otimes e^{-\im H t})$.
	 
	 \subsubsection{Non-Markovian case with control}
	  We can depict the scenario where we are allowed to perform control operations between the interactions in Figure \ref{fig:scenarios}(a). The resulting Choi operator of the comb is obtained by tensoring $N$ times the system's degrees of freedom and perform matrix multiplication on the environmental degrees of freedom, resulting in an operator $U_{\text{tot}} \in \mathcal{L}(\h{E} \otimes \bigotimes_{i=1}^{2N} \h{i})$. We initialise the first environment in $\ket{0}$ and trace out over the last environment $\h{E_{2N}}$. In order to obtain a vector, we flatten this matrix to a vector and denote this operation by $\mathrm{vec}$. This results in the ensemble decomposition
		\begin{align} \label{eq:kraus_interpretation} \ket{C_{\theta,i}} = \mathrm{vec}(\mel{i}{U_{\text{tot}}}{0}), \end{align}
	where $\ket{i}$ denotes an orthonormal basis of the last environmental system $\h{E_{2N}}$.
		 We note that, if we include the environment, the evolution can be described by a unitary, and the ensemble decomposition results of our ignorance about the state of the environment. Note that the above expression	resembles the expression for the Kraus operator $\hat{E}_i$ of a channel $\mathcal{E}(\rho) = \sum_{i} \hat{E}_i \rho \hat{E}_i^\dagger$, which can be obtained by its Stinespring dilation unitary $U$ in the following way: $\hat{E}_i = \mel{i}{U}{0}$, where $\ket{0}$ denotes the initial state of the environment and $\ket{i}$ is an orthonormal basis of the environment \cite{Nielsen2000}.
	 
	 \subsubsection{Non-Markovian case without control} 
	 If we  cannot perform any control operations, we have the scenario depicted in Figure \ref{fig:scenarios}(b). The total unitary $U_{\text{tot}} = U^N$ is the $N$-fold concatenation of $U$ resulting in an ensemble decomposition
	 \begin{equation}
	 \ket{C_{\theta,i}} = \mathrm{vec}(\mel{i}{U^N}{0}), 
	 \end{equation}
	where as before we flatten the matrix to a vector.
	
		\begin{figure}
	\centering
	\includegraphics[width=.7\columnwidth]{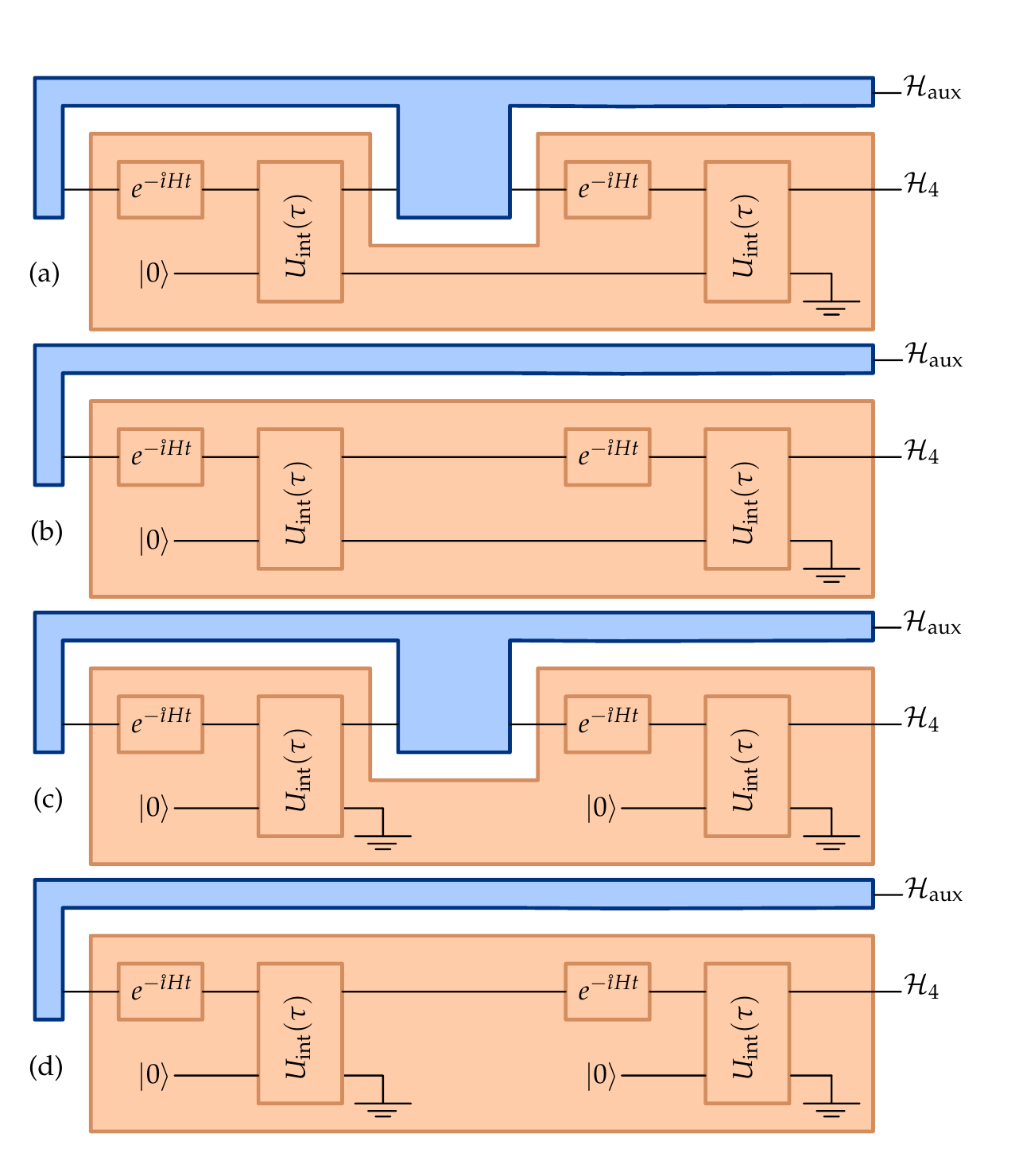}
	\caption{Collision model with $N=2$ interactions. We distinguish between non-Markovian processes (a, b) and Markov processes (c, d). In addition we distinguish between operations with control (a, c) and without control (b, d).}
	\label{fig:scenarios}
	\end{figure}

	\subsubsection{Markovian case with control}
	We assume the environment to have no memory. Hence, after each interaction, we trace over the environment and reinitialise it in the next step in the ground state, see Figure \ref{fig:scenarios}(c) for the $N=2$ case. For each interaction step $k=1, \ldots, N$, we obtain the components $\ket{C_{\theta,i_k}} = \mathrm{vec}(\mel{i_k}{U}{0})$ in the ensemble decomposition with $\ket{i_k}$ an orthonormal basis of the environment $\h{E_k}$. In the end, we tensor these components to obtain the total ensemble decomposition
	\begin{equation}
		\lket{C_{\theta,\vec{i} = (i_1, \ldots, i_{N})}} = \bigotimes_{k=1}^{N} \ket{C_{\theta,i_k}}.
\end{equation}

	\subsubsection{Markovian case without control} 
	In the Markovian case without control, we trace out the environment after each interaction and reinitialise it in the ground state and thus obtain components $U_{\theta,i_k} = \mel{i_k}{U}{0}$ for each interaction $k=1,\ldots, N$. In addition, we do not perform any operation on the system's side. This scenario is depicted in Figure \ref{fig:scenarios}(d). We obtain the resulting ensemble decomposition by concatenating the components $C_{\theta,i_k}$:
	\begin{equation}
	\lket{C_{\theta,\vec{i}=(i_1, \ldots, i_N)}} = \mathrm{vec}\left(\prod_{k=1}^N U_{\theta,i_k}\right).
	\end{equation}

	\subsection{Different interactions} 
	\label{sec:uint}
	We give some examples \cite{Ziman2005} for unitary interactions $U = U_{\text{int}}(\tau) \, (\mathds{1}_E \otimes e^{-\im H t})$.
	\subsubsection{Partial SWAP}
	Consider a qubit in $\h{S}$ interacting with a partial swap with a two-dimensional environment $\h{E}$
	: In the basis $\{\ket{00}_{ES}$, $\ket{01}_{ES}$, $\ket{10}_{ES}$, $\ket{11}_{ES}\}$ the swap is given as
		\begin{align} \label{eq:swap} \text{SWAP} = \begin{pmatrix} 1 & 0 & 0 & 0 \\ 0 & 0  & 1 & 0 \\ 0 & 1 & 0 & 0 \\ 0 & 0 & 0 & 1 \end{pmatrix}. \end{align}
		
	If the system interacts with the environment with interaction strength $g$ and for time $\tau$ under this SWAP, the evolution is $U_{\text{SWAP}}(\tau) = e^{-\im g \tau \text{SWAP}}$.
%
	 We assume that the qubit's evolution is governed by a phase shift $H = \omega \ket{1}\bra{1}$, thus in the same basis as above, the time evolution of the joint system is
		\begin{equation} U = 
		\begin{pmatrix} e^{-\im g \tau} & 0 & 0 & 0 \\ 0 &  \cos(g \tau)e^{-\im \omega t} & -\im \sin(g \tau) & 0 \\ 0 & -\im \sin(g \tau) e^{-\im \omega t}  & \cos(g \tau) & 0 \\ 0 & 0 & 0 & e^{-\im (\omega t+g \tau)} \end{pmatrix}.				\end{equation}
	 
%
	
\subsubsection{Partial CNOT with environment as control}	
	We consider a partial CNOT as interaction. If we consider the environment as the control and the system as the target, the CNOT acts on the total system as
		\begin{equation} \text{CNOT}_E = \begin{pmatrix} 1 & 0 & 0 & 0 \\
			0 & 1 & 0 & 0 \\
			0 & 0 & 0 & 1 \\
			0 & 0 & 1 & 0 \end{pmatrix}. \end{equation}
	in the same basis  as above. The corresponding unitary is $U_{\text{CNOT}_E}(\tau) = e^{-\im  g \tau \text{CNOT}_E}$ where $ g$ is the interaction strength and $\tau$ the interaction time. We assume that the system undergoes a phase shift $H = \omega \ket{1}\bra{1}$ for time $t$ before the partial CNOT is applied. The resulting unitary is
		\begin{equation} U_{\text{CNOT}_E}(\tau) (\mathds{1}_E \otimes e^{-\im H t}) = 
		\begin{pmatrix} e^{-\im  g \tau} & 0 & 0 & 0 \\
		0 & e^{-\im ( g \tau + \omega t)} & 0 & 0 \\
		0 & 0 & \cos( g \tau) &  -\im \sin(g \tau) e^{-\im \omega t} \\
		0 & 0 & -\im \sin(g \tau) & \cos(g \tau) e^{-\im \omega t} \end{pmatrix}	. \end{equation}
	If we initialise the environment in $\ket{0}_{E_1}$, we note that the partial CNOT does not have any effect. That is why we choose to initialise in $\ket{+}_{E_1} = \frac{1}{\sqrt{2}} (\ket{0}_{E_1} + \ket{1}_{E_1})$. In order to be able to use the same setup as before, we set 
	\begin{equation} U = (U_{\text{basis}} \otimes \mathds{1}_S)^\dagger \, U_{\text{CNOT}_E}(\tau) \, (\mathds{1}_E \otimes e^{-\im H t}) \, (U_{\text{basis}} \otimes \mathds{1}_S), \end{equation}
	with $U_{\text{basis}}$ given by
		\begin{equation} U_{\text{basis}}  = \frac{1}{\sqrt{2}} \begin{pmatrix}
		1 & 1 \\ 1 & -1 \end{pmatrix}. \end{equation}
	 
%
	
	\subsubsection{Partial CNOT with system as control} 
	If we assume the system to control the CNOT and the environment to be the target, we start with
		\begin{equation} \text{CNOT}_{S} = \begin{pmatrix} 1 & 0 & 0 & 0 \\
		0 & 0 & 0 & 1 \\
		0 & 0 & 1 & 0 \\
		0 & 1 & 0 & 0 \end{pmatrix}, \end{equation}
	where we used the same basis as before. Analogously to the previous case, we then find $U = e^{-\im g \tau \text{CNOT}_{S}} (\mathds{1}_E \otimes e^{-\im H t}_S)$ to be
		\begin{equation} U = \begin{pmatrix}
		e^{-\im g \tau} & 0 & 0 & 0 \\
		0 &  \cos(g \tau)e^{-\im \omega t}  & 0 & -\im \sin(g \tau) e^{-\im \omega t} \\
		0 & 0 & e^{-\im g \tau} & 0 \\
		0 & -\im \sin(g \tau) e^{-\im \omega t} & 0 & \cos(g \tau) e^{-\im \omega t}
		\end{pmatrix}. \end{equation}

	\subsubsection{Bitflip channel}
	Consider a system-environment interaction $H_{\text{int}} = g X \otimes X$ with $g$ the interaction strength and $X$ the Pauli $X$ matrix. This interaction is a good approximation for situations in which the coupling to a single spin from the bath dominates over other interactions \cite{Oreshkov2007, Krovi2007}.	
	The environment starts in the completely mixed state $\rho_{E,0} = \frac{1}{2} \mathds{1}_E$, which corresponds to an equilibrium state at high temperature. Furthermore we assume that the system's initial state is uncorrelated with the environment: $\rho_0 = \rho_{S,0} \otimes \rho_{E,0}$. Note that $e^{-\im g \tau \, X \otimes X} = \cos(g \tau) \mathds{1}_{SE} - \im \sin(g \tau) X \otimes X$ and the unitary $U = U_{\text{int}}(\tau) \, (\mathds{1}_E \otimes e^{-\im H t})$ is given by
	
	\begin{equation} U = \begin{pmatrix}
		\cos(g \tau) & 0 & 0 & -\im \sin(g \tau) e^{-\im \omega t}  \\
		0 &  \cos(g \tau) e^{-\im \omega t} & -\im \sin(g \tau)  & 0 \\
		0 & -\im \sin(g \tau) e^{-\im \omega t} & \cos(g \tau) & 0 \\
		-\im \sin(g \tau) & 0 & 0 & \cos(g \tau) e^{-\im \omega t}
		\end{pmatrix}.
	\end{equation}
		
	 We observe that the state after an interaction corresponds to
		\begin{equation}
		\begin{aligned}
		& & \rho_{SE}(\tau) &= \left( \cos^2(g \tau) \rho_{S,0} + \sin^2(g \tau) \, X \rho_{S,0} X \right) \otimes \frac{1}{2} \mathds{1}_E \\
		& & &+ \im \cos(g \tau) \sin(g \tau) ( \rho_{S,0} X + X \rho_{S,0}) \otimes \frac{1}{2} X.
		\end{aligned}
		\end{equation}
	The dynamics reduces on the system's space to
		\begin{equation}
		\rho_S(\tau) = \tr_E\left[ \rho_{SE}(\tau) \right] = \cos^2(g \tau) \rho_{S,0} + \sin^2(g \tau) \, X \rho_{S,0} X.
		\end{equation}
	Setting $p = \sin^2(g \tau)$, we recover the bit flip channel
	\begin{equation}
	\rho_S = (1-p) \rho_{S,0} + p X \rho_{S,0} X.
\end{equation}

	\subsection{Results}
	\label{sec:results}	
	We implement the semidefinite program using CVX \cite{CVX}.
	For the calculation, we choose $N=2,3,4$, the frequency $\omega = \frac{\pi}{10}$, the total time $t_{\text{tot}} \in [0,21]$ and the interaction strength $g = 1$. This ensures that $ \omega \cdot t_{\text{tot}}$ spans $[0, 2\pi]$. By repeating the simulation for different values of $\omega$, we see that the QFI and the optimal tester do not depend on $\omega$. This allows to probe the comb independently of the true value of $\omega$. We choose the sampling and interaction time of a single evolution to be $t = \tau = \frac{t_{\text{tot}}}{N}$.
	Some of the obtained plots are depicted in Figure \ref{fig:N2}. The noiseless limit $J_Q(t_{\text{tot}}) = t_{\text{tot}}^2$ is realised by setting the interaction strength $g = 0$. We note that the non-Markovian case with control achieves the highest QFI for the partial SWAP. For the partial CNOT with the environment as control and the bitflip channel, it even achieves the noiseless scaling $J_Q = t_{\text{tot}}^2$. For $g\tau$ a multiple of $\pi$, the system-environment interaction becomes trivial and we recover the noiseless scaling for all scenarios. For $g \tau = \frac{2k+1}{2} \pi$ with $k$ an integer, the interaction between system and environment is maximal; e.g., for the swap, all the information about $\theta$ is swapped to the environment and erased in the Markovian case. 

	\begin{figure}[h!]
	\begin{center}
	\includegraphics[width=1\textwidth]{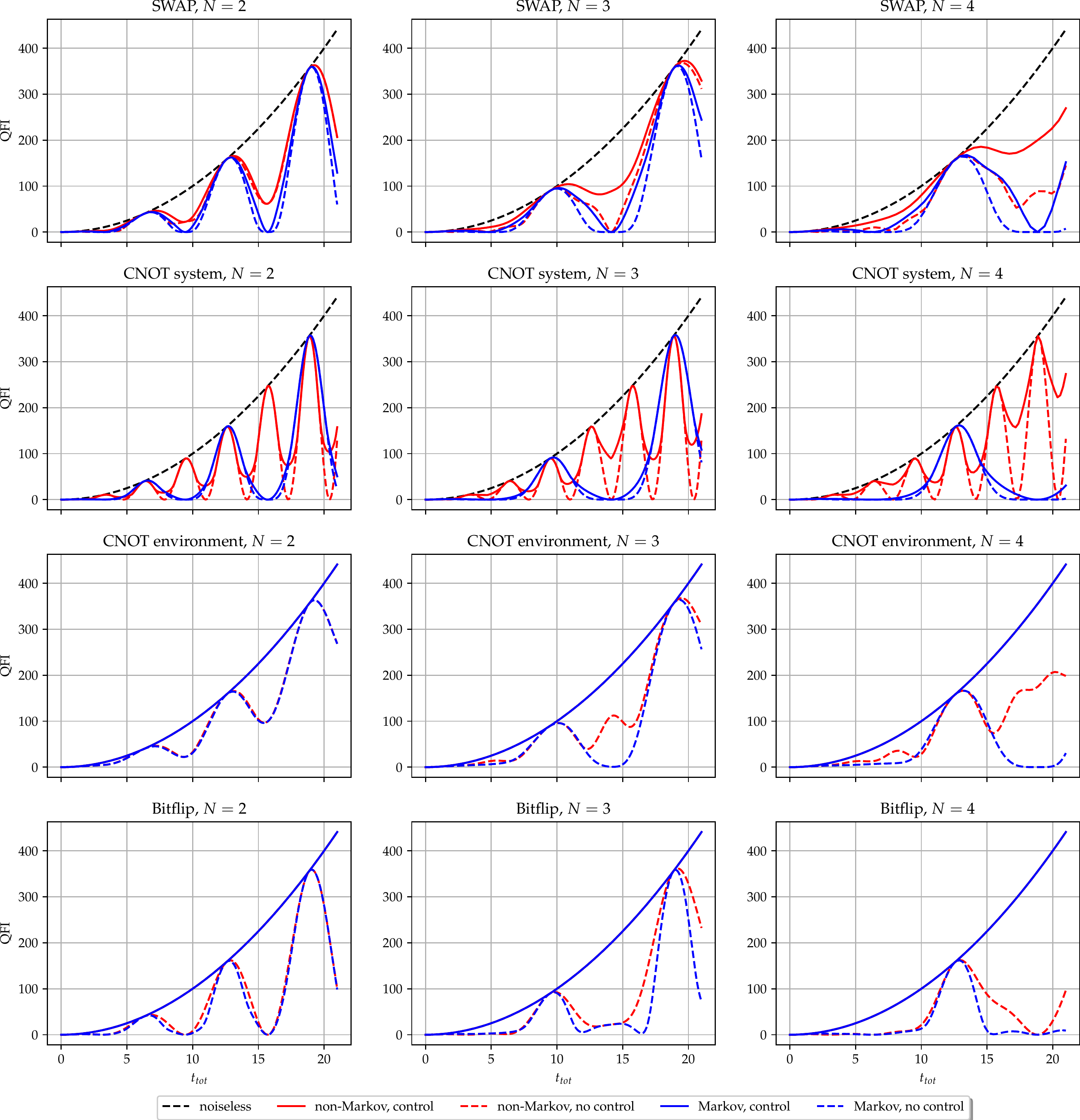}
	\caption{Quantum Fisher information as a function of the total evolution time. The interaction time $\tau$ is chosen equal to the evolution time $t$. For the CNOT environment and the bitflip, the blue and red solid lines overlap with the black dashed line.}
	\label{fig:N2}
	\end{center}
	\end{figure}	
	
	Since the dual problem finds the optimal $h$ for $\Omega_\theta(h)$, we can insert it in the primal problem to find the optimal probe $T$. The optimal probes are diagonal. The data can be found on a repository \cite{Altherr2021}.
		Note that, e.g., for $N=2$ the probe is an element in $T \in  \text{Comb}[(\emptyset, \h{1}), (\h{2},\h{3})]$, since the auxiliary space $\h{\text{aux}}$ does not enter the SDP. In this work, we only exactly calculate the QFI of a generic quantum $N$-comb for $N \leq 4$, and calculation regarding the $N>5$ cases is likely to require new numerical methods and/or approximations.
		
\section{QFI for adaptive channel estimation}
	An interesting question in channel estimation arises by considering $N$ copies of a parametrised quantum channel $\mathcal{E}_\theta$. Here we show that the QFI corresponding to the optimal adaptive strategy for this task can be readily evaluated using our result. 

	The asymptotic limit of the QFI for $N \rightarrow \infty$ copies has been treated in Ref.~\cite{Fujiwara2008, Demkowicz2012}. We can use our framework to model this question by considering the Choi operator $E_\theta$ of that quantum channel with decomposition $E_\theta = \sum_i \ket{E_{\theta,i}}\bra{E_{\theta,i}}$. We consider $N$ copies that act on subsequent Hilbert spaces $E_\theta \in \text{Comb}[(\h{2k-1},\h{2k})]$ for $k=1,\ldots, N$. The resulting comb of these $N$ channels is the tensor product of the individual channels $C_\theta := E_\theta^{\otimes N}$ and we find an ensemble decomposition $C_\theta = \sum_{\vec{i}} \lket{C_{\theta,\vec{i}}} \lbra{C_{\theta,\vec{i}}}$ (we use a vectorial index $\vec{i} = (i_1, \ldots, i_N)$) with
	\begin{align}
	\lket{C_{\theta,\vec{i} = (i_1, \ldots, i_N)}} = \bigotimes_{k=1}^N \lket{E_{\theta,i_k}}.
	\end{align}
	We determine the performance operator of $C_\theta$ by computing the derivatives of $\lket{C_{\theta,\vec{i}}}$:
	\begin{align}
	\lket{\dot{C}_{\theta,\vec{i}}} = 
	\sum_{k=1}^N \lket{E_{\theta,i_1}} \dots \lket{E_{\theta,i_{k-1}}} \lket{\dot{E}_{\theta,i_{k}}} \lket{E_{\theta,i_{k+1}}}  \dots \lket{E_{\theta,i_N}}
	\end{align}		
	We insert this derivative into
	\begin{align}
	\Omega_\theta^{N \text{ copies}} := 4 \sum_{\vec{i}} \lket{\dot{C}_{\theta,\vec{i}}} \lbra{\dot{C}_{\theta,\vec{i}}}
	\end{align}
	and note that $\Omega_\theta^{N \text{ copies}}$ consists of the following terms
	\begin{align}
	&\sum_{i_k} \lket{\dot{E}_{\theta,i_k}} \lbra{\dot{E}_{\theta,i_k}} = \tfrac{1}{4} \Omega_\theta, \\
	&\sum_{i_k} \lket{\dot{E}_{\theta,i_k}} \lbra{E_{\theta,i_k}} =: \tfrac{1}{2}(\dot{E}E)_\theta, \\
	&\sum_{i_k} \lket{E_{\theta,i_k}} \lbra{\dot{E}_{\theta,i_k}} =: \tfrac{1}{2} (E\dot{E})_\theta, \\
	&\sum_{i_k} \lket{E_{\theta,i_k}} \lbra{E_{\theta,i_k}} = E_\theta,	
	\end{align}
	where $\Omega_\theta$ is the performance operator of the channel $E_\theta$. More explicitly,
	\begin{align}
	\Omega_\theta^{N \text{ copies}} =
	\sum_{i+j+1=N} E_\theta^{\otimes i} \Omega_\theta E_\theta^{\otimes j} +
	\sum_{i+j+k+2=N} E_\theta^{\otimes i} [ (\dot{E}E_\theta) \, E_\theta^{\otimes j} \,  (E\dot{E})_\theta + (E\dot{E})_\theta \, E_\theta^{\otimes j} \, (\dot{E}E)_\theta)] E_\theta^{\otimes k}.
\end{align}		
	Finally, according to Theorem 1 of the main text, the QFI of the optimal adaptive strategy is determined by $\Omega_\theta^{N \text{ copies}}$ (or, more explicitly, the conditional comb min-entropy of $\Omega_\theta^{N \text{ copies}}$). The optimal adaptive strategy of channel estimation can therefore be determined.
\end{widetext}

\end{document}